\newcommand\Xcomment[1]{}
\newcommand{\eqdef}{\stackrel{def}{=}}
\newcommand{\Ex}{\mathbb{E}}
\newcommand{\e}{\epsilon}
\newcommand{\Remove}[1]{}
\newcommand{\one}{1}
\newcommand{\n}{n}
\newcommand{\two}{2}
\newcommand{\red}{white}
\newcommand{\comment}[1]{}
\newtheorem{theorem}{Theorem}[section]
\newtheorem{lemma}[theorem]{Lemma}
\newtheorem{corollary}[theorem]{Corollary}
\newcommand{\ud}{\mathrm{d}}
\begin{document}

\title{\Large  Balanced Allocations: A Simple Proof for the Heavily Loaded Case
}

\author{
Kunal Talwar\thanks{Microsoft Research, \texttt{kunal@microsoft.com}} \and
Udi Wieder\thanks{Microsoft Research, \texttt{uwieder@microsoft.com}}}


\maketitle


\begin{abstract}
We provide a relatively simple proof that the expected gap between the maximum load and the average load in the two choice process is bounded by $(1+o(1))\log \log n$, irrespective of the number of balls thrown. The theorem was first proven by Berenbrink et al. in \cite{BCSV00}. Their proof uses heavy machinery from Markov-Chain theory and some of the calculations are done using computers. In this manuscript we provide a significantly simpler proof that is not aided by computers and is self contained. The simplification comes at a cost of weaker bounds on the low order terms and a weaker tail bound for the probability of deviating from the expectation.
\end{abstract}




\newtheorem{definition}{Definition} 
\newtheorem{claim}{Claim} 
\newtheorem{remark}[definition]{Remark}
\newtheorem{example}[definition]{Example}

\Xcomment{
\newtheorem{notate}[definition]{Notation}
\newtheorem{theorem}[definition]{Theorem}
\newtheorem{facten}[definition]{Fact}
\newtheorem{problem}[definition]{Problem}
\newtheorem{corollary}[definition]{Corollary}
\newtheorem{obs}[definition]{Observation}
\newtheorem{notat}[definition]{Notation}
\newtheorem{lemma}[definition]{Lemma}
\newtheorem{remark}[definition]{Remark}

\newenvironment{proof}{{\bf Proof:} \rm}{\hfill $\square$\medskip}

\def\beginsmall#1{\vspace{-\parskip}\begin{#1}\itemsep-\parskip}
\def\endsmall#1{\end{#1}\vspace{-\parskip}}
}

\newcommand{\namedref}[2]{#1~\ref{#2}}
\newcommand{\sectionref}[1]{\namedref{Section}{#1}}
\newcommand{\appendixref}[1]{\namedref{Appendix}{#1}}
\newcommand{\subsectionref}[1]{\namedref{Subsection}{#1}}
\newcommand{\theoremref}[1]{\namedref{Theorem}{#1}}
\newcommand{\defref}[1]{\namedref{Definition}{#1}}
\newcommand{\figureref}[1]{\namedref{Figure}{#1}}
\newcommand{\figref}[1]{\namedref{Figure}{#1}}
\newcommand{\claimref}[1]{\namedref{Claim}{#1}}
\newcommand{\lemmaref}[1]{\namedref{Lemma}{#1}}
\newcommand{\tableref}[1]{\namedref{Table}{#1}}
\newcommand{\corollaryref}[1]{\namedref{Corollary}{#1}}
\newcommand{\propertyref}[1]{\namedref{Property}{#1}}
\newcommand{\appref}[1]{\namedref{Appendix}{#1}}
\newcommand{\propref}[1]{\namedref{Proposition}{#1}}




\section{A Bit of History}
In the Greedy$[d]$ process (sometimes called the $d$-choice process), balls are placed sequentially into $[n]$ bins with the following rule: Each ball is placed by uniformly and independently sampling $d$ bins and assigning the ball to the least loaded of the $d$ bins. In other words, the probability a ball is placed in one of the $i$ heaviest bins (at the time when it is placed) is exactly\footnote{Assume for simplicity and w.l.o.g that ties are broken according to some fixed ordering of the bins.}  $(i/n)^d$.  We remark that using this characterization there is no need to assume that $d$ is a natural number (though the process is algorithmically much simpler when $d$ is an integer). The main point is that whenever $d>1$ the process is \emph{biased}: the lighter bins have a higher chance of getting a ball. In this paper we are interested in the \emph{gap} of the allocation, which is the difference between the number of balls in the heaviest bin, and the average. The case $d=1$, when balls are placed uniformly at random in the bins, is well understood. In particular when $n$ balls are thrown the bin with the largest number of balls will have $\Theta(\log n/\log \log n)$ balls w.h.p. Since the average is $1$ this is also the gap. If $m>>n$ balls are thrown the heaviest bin will have  $m/n + \Theta(\sqrt{m\log n/n})$ balls w.h.p. \cite{RS98}.

In an influential paper Azar et al. \cite{ABKU99} showed that when $n$ balls are thrown and $d>1$ the gap is $\log \log n /\log d +O(1)$ w.h.p. The case $d=2$ is implicitly shown in Karp et al. \cite{KLH92}. The proof by Azar et al. uses a simple but clever induction; in our proof here we take the same approach. Bounding the number of balls by $n$ (or by $O(n)$) turns out to be a crucial assumption: the proof in \cite{ABKU99} breaks down once the number of balls is super-linear in the number of bins. Two other approaches to prove this result, namely, using differential equations or witness trees, also fail when the number of balls is large. See for example the survey \cite{MRS00}. A breakthrough was achieved by Berenbrink et al. in \cite{BCSV00}. They proved that the same bound on the gap holds for \emph{any}, arbitrarily large number of balls. Contrast this with the one choice case in which the gap diverges with the number of balls. At a (very) high level their approach was the following: first they show that the gap after $m$ balls are thrown is distributed similarly to the gap after only $poly(n)$ balls are thrown. This is done by bounding the mixing time of the underlying Markov Chain. The second step is to extend the induction technique of~\cite{ABKU99} to the case of $poly(n)$ balls. This turns out to be a major technical challenge which involves four inductive invariants and computer aided calculations. As such, finding a simpler proof remained an interesting open problem. In this paper we provide such a proof. The simplification comes at a minor cost: we get weaker tail bounds and higher lower order terms.  While~\cite{BCSV00} show that for any $c$, the gap is at most $\log\log n + \gamma(c)$ with probability $(1-\frac{1}{n^c})$ for a constant $\gamma(c)$ depending on $c$ alone, our proof shows that the gap is $\log\log n + \gamma'(c)\cdot\log\log\log n$ with probability $(1-\frac{1}{(\log\log n)^c})$ for a constant $\gamma'(c)$ depending on $c$ alone..

\section{The Proof}
We define the \emph{load vector} $X^t$ to be an $n$ dimensional vector where $X_i^t$ is the difference between the load of the $i$'th bin after $tn$ balls are thrown and the average $t$, (so that a load of a bin could be negative and $\sum X_i = 0 $). We also assume without loss of generality that the vector is sorted so that $X^t_\one\geq X^t_\two\geq...\geq X^t_{\n}$.
We will consider the Markov chain defined by  $X^{t}$, so one step of the chain consists of throwing $n$ balls according to the $d$-choice scheme and then sorting and normalizing the load vector.

The main tool we use is the following Theorem proven in \cite{PTW10} using a potential function argument. For the reader's convenience we include a proof in Section~\ref{sec:potential}.
\begin{theorem}
\label{thm:potentialbound}
There exists universal constants $a$ and $b$ which may depend on $d$ but not on $n$ or $t$, such that, $\Ex[\sum_i\exp(a|X_i^t|)]\leq bn$.
\end{theorem}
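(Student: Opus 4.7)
The plan is to carry out the hyperbolic-cosine potential argument from \cite{PTW10}. Let
\[
\Phi(X) = \sum_{i=1}^n \bigl( e^{aX_i} + e^{-aX_i} \bigr)
\]
for a small constant $a>0$ to be chosen as a function of $d$. Since $e^{aX_i}+e^{-aX_i}\geq e^{a|X_i|}$, it suffices to show $\Ex[\Phi(X^t)] \leq Bn$ uniformly in $t$, for some $B=B(d)$. The key step is a one-step drop inequality
\[
\Ex\bigl[\Phi(X^{t+1})\mid X^t\bigr] \leq \alpha\,\Phi(X^t) + \beta n
\]
for constants $\alpha\in(0,1)$ and $\beta>0$ depending only on $d$. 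Iterating over $t$ then gives $\Ex[\Phi(X^t)] \leq \alpha^t\Phi(X^0) + \beta n/(1-\alpha) = O(n)$, using $\Phi(X^0) = 2n$ for an empty start.

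I would prove the drop inequality by analyzing a single ball throw and then iterating over the $n$ balls inside a macro-step. Conditional on $X$, a ball lands in the rank-$i$ bin with probability $p_i = (i/n)^d - ((i-1)/n)^d$, after which renormalization shifts the chosen coordinate by $1-1/n$ and every other coordinate by $-1/n$. Expanding $\Ex[\Phi(X')\mid X]$ and keeping the leading terms in $a$ gives
\[
\Ex\bigl[\Phi(X')-\Phi(X) \mid X\bigr] = a\sum_{i=1}^n (p_i-1/n)\bigl(e^{aX_i}-e^{-aX_i}\bigr) + O(a^2/n)\,\Phi(X).
\]
Because the loads are sorted decreasingly while $p_i$ is strictly increasing in $i$ (for $d>1$), the two sequences are anti-monotone and the first-order sum is non-positive by rearrangement. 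The real work is to upgrade this to the quantitative bound $-\sum_i(p_i-1/n)(e^{aX_i}-e^{-aX_i}) \geq (c_1/n)\Phi(X) - c_2$ for constants $c_1,c_2>0$ depending on $d$. For $d=2$ this is particularly clean because $p_i + p_{n+1-i} = 2/n$: pairing rank $i$ with rank $n+1-i$ reduces the sum to non-negative pair contributions proportional to $(e^{aX_i}-e^{-aX_i})-(e^{aX_{n+1-i}}-e^{-aX_{n+1-i}})$, which one checks scale like $\Phi(X)/n$ whenever the potential is large. Choosing $a$ small enough for the $O(a^2/n)$ remainder to be absorbed yields a per-ball inequality $\Ex[\Phi(X')\mid X] \leq (1-c_1'/n)\Phi(X) + c_2'$; iterating it over the $n$ balls of a macro-step produces the drop inequality with $\alpha = e^{-c_1'}$ and $\beta = c_2'/c_1'$.

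The main obstacle will be the quantitative lower bound on the drift: the rearrangement inequality alone only provides non-positivity, so one must exploit the specific structure of the $d$-choice sampling probabilities to extract a linear drift proportional to $\Phi(X)$ (modulo an additive constant). Once that is in place, the Taylor-series bookkeeping for the per-ball expansion and the geometric aggregation over the $n$ balls in a macro-step are routine.
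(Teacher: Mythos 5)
Your overall architecture is the same as the paper's: a symmetric exponential (hyperbolic-cosine) potential, a per-step drift inequality of the form $\Ex[\Phi(X^{t+1})\mid X^t]\leq(1-\Theta(1/n))\Phi(X^t)+O(1)$ per ball, and iteration to get a uniform $O(n)$ bound. The Taylor bookkeeping is also fine: since $p_i\leq d/n$, the second-order remainder is indeed $O_d(a^2/n)\Phi(X)$, and the first-order term is $a\sum_i(p_i-\tfrac1n)\bigl(e^{aX_i}-e^{-aX_i}\bigr)$, which is nonpositive by Chebyshev's sum inequality because $p_i$ is nondecreasing while the loads are sorted.

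The gap is exactly where you flag "the real work": the quantitative drift bound $-\sum_i(p_i-\tfrac1n)\bigl(e^{aX_i}-e^{-aX_i}\bigr)\geq \tfrac{c_1}{n}\Phi(X)-c_2$ is asserted, not proven, and the pairing heuristic you offer for $d=2$ does not establish it. The per-pair claim is false as stated: pairs $(i,n+1-i)$ with $i$ near $n/2$ have weight $\tfrac1n-p_i\approx 0$ and contribute essentially nothing no matter how large the potential is, and in a configuration such as $X_1=K$ huge with all other coordinates $\approx -K/(n-1)$, only the single pair $(1,n)$ carries the drift. So the inequality can only hold in aggregate, and proving it requires using sortedness together with the zero-sum constraint globally, with careful constants (the ranks between $n/2$ and the last positive coordinate contribute with the \emph{wrong} sign at magnitude up to $\Phi/n$, the same order you are trying to gain). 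This aggregate inequality is precisely the mathematical heart of the theorem, and it is where the paper spends its effort: Lemmas~\ref{lem:phidecreases} and~\ref{lem:psidecreases} handle balanced configurations ($x_{3n/4}\leq 0$, resp.\ $x_{n/4}\geq 0$) by an extremal argument over sorted vectors, while Lemmas~\ref{lem:phibadcase} and~\ref{lem:psibadcase} plus the case analysis of Theorem~\ref{thm:gammadecreases} show that in lopsided configurations either the growing one-sided potential is dominated by the shrinking one (so the combined potential still drifts down) or the total potential is already $O(n)$. Nothing in your sketch substitutes for this step. A secondary point: the antisymmetry $p_i+p_{n+1-i}=2/n$ is special to $d=2$, whereas the theorem is for all $d>1$; the paper's argument needs only monotonicity of $p_i$ and the $\epsilon$-bias conditions (2)--(3), so it covers general $d$ uniformly, while your proposal leaves the general-$d$ drift bound entirely open.
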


{
Let $G_t \eqdef X^t_\one$ denote the gap between maximum and average when sampling from $X^t$. Theorem \ref{thm:potentialbound} immediately implies the following:
\begin{lemma}\label{lem:logbound}
For any $t$, any $c\geq 0$, $\Pr[G^t \geq (c\log n)/a] \leq bn/n^c$. Thus for every $c$ there is a $\gamma = \gamma(c)$ such that $\Pr[G^t \geq \gamma \log n] \leq n^{-c}$.
\end{lemma}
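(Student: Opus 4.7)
The plan is to apply Markov's inequality directly to the random variable $\exp(a G^t)$, leveraging Theorem~\ref{thm:potentialbound}. Since the load vector is sorted and has mean zero, we have $G^t = X^t_\one \geq 0$, so the exponential $\exp(a G^t) = \exp(a X^t_\one) = \exp(a |X^t_\one|)$ is a single (nonnegative) term appearing in the sum $\sum_i \exp(a |X^t_i|)$. Hence
\[
\Ex[\exp(a G^t)] \;\leq\; \Ex\Bigl[\sum_i \exp(a|X_i^t|)\Bigr] \;\leq\; bn.
\]

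For the first bound, fix $c\geq 0$ and apply Markov's inequality to the nonnegative random variable $\exp(a G^t)$ at the threshold $n^c$. The event $\{G^t \geq (c\log n)/a\}$ is identical to $\{\exp(a G^t) \geq n^c\}$, so
\[
\Pr\!\bigl[G^t \geq (c\log n)/a\bigr] \;=\; \Pr\!\bigl[\exp(a G^t)\geq n^c\bigr] \;\leq\; \frac{\Ex[\exp(a G^t)]}{n^c} \;\leq\; \frac{bn}{n^c}.
\]

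For the second statement, I would simply absorb the extra factor of $n$ by taking a slightly larger constant. Setting $\gamma = (c+1)/a$ yields $bn/n^{a\gamma} = b/n^{c}$, and replacing $\gamma$ with $\gamma(c) \eqdef (c + \log_n b + 1)/a$ (or, more cleanly, $\gamma(c) = (c+2)/a$ for $n$ large enough that $b \leq n$) gives $\Pr[G^t \geq \gamma\log n] \leq n^{-c}$, as required.

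There is no real obstacle here: the lemma is essentially just unpacking the content of Theorem~\ref{thm:potentialbound} via Markov's inequality, using the sortedness of $X^t$ to relate $G^t$ to the single largest term in the exponential sum. The only small subtlety worth pointing out in the write-up is the observation $G^t \geq 0$ (guaranteed because $\sum_i X^t_i = 0$), which is what allows $\exp(a G^t)$ to be bounded by the potential function without first taking absolute values.
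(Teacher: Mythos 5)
Your proof is correct and is exactly the argument the paper intends (the paper simply asserts the lemma follows "immediately" from Theorem~\ref{thm:potentialbound}): bound $\Ex[\exp(aG^t)]$ by $bn$ since $G^t = X^t_\one \geq 0$ is one term of the potential, apply Markov's inequality at threshold $n^c$, and absorb the extra factor of $n$ (and $b$) into a slightly larger constant $\gamma(c)$. No gaps; your remark that $G^t \geq 0$ follows from $\sum_i X^t_i = 0$ is the right small point to make explicit.
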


Armed with this result, the crucial lemma, that we present next, says that if the gap at time $t$ is $L$, then after throwing another $nL$ balls, the gap becomes $\log\log n + O(\log L)$ with probability close to $1$. A bit more formally, if $L > \log_d\log n + O(\log\log\log n)$, the tail probabilities $\Pr[G^t\geq L]$ and $\Pr[G^{t+L} \geq \log_d\log n + O(\log L)]$ differ by at most an additive $\frac{1}{poly(L)} + \frac{1}{poly(n)}$.  Then using Lemma~\ref{lem:logbound}, we will infer a tail bound for $\Pr[G \geq \log_d\log n + O(\log\log\log n)]$.

\begin{lemma}\label{lem:inductionstep}
For any $c>0$ there is a $\gamma = \gamma(c)$, independent of $n$, so that for any $t,\ell,L$ such that $\ell \leq L\leq n^{\frac{1}{4}}$,
$\Pr[G^{t+L}\geq \log_d \log n + \ell + \gamma] \leq \Pr[G^t \geq L] + 8bL^3/\exp(a\ell) + n^{-c}$, where $a,b$ are the constants from Theorem~\ref{thm:potentialbound}.
\end{lemma}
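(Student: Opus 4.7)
The plan follows the layered-induction template of Azar--Broder--Karlin--Upfal~\cite{ABKU99}, with two adaptations for the heavily-loaded regime. Because there is no a priori deterministic upper bound on the top load, I start the induction not at the maximum but at an intermediate layer whose population is controlled by Theorem~\ref{thm:potentialbound}. And because we must analyze a window of $nL$ balls rather than a single batch of $n$, I parameterize the layer counts in \emph{absolute} loads and freeze the reference point at the final average $t+L$, so the counts become monotone in time.

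Concretely, let $A_i^{\tilde{j}}$ denote the absolute load of bin $i$ after $\tilde{j}\in\{0,\dots,nL\}$ balls of the current window have been thrown, and set $M_k^{\tilde{j}}:=|\{i:A_i^{\tilde{j}}\geq t+L+k\}|$. Since loads only increase, $M_k^{\tilde{j}}$ is non-decreasing in $\tilde{j}$, so it suffices to bound $M_k^{nL}=|\{i:X_i^{t+L}\geq k\}|$ and then propagate to every $\tilde{j}$ for free. The first step is to condition on $\{G^t<L\}$, which costs $\Pr[G^t\geq L]$ from the budget and guarantees $M_k^0=0$ for every $k\geq 0$. The second step uses Theorem~\ref{thm:potentialbound} at time $t+L$: from $\Ex[M_\ell^{nL}]\leq bn/\exp(a\ell)$ and Markov, $\Pr[M_\ell^{nL}\geq n/(8L^3)]\leq 8bL^3/\exp(a\ell)$, giving the base case $\beta_\ell:=n/(8L^3)$.

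For the inductive step, assume $M_k^{\tilde{j}}\leq\beta_k$ uniformly in $\tilde{j}$. A ball is placed in a bin at absolute level $\geq t+L+k$ only when all $d$ of its sampled bins already lie at that level, an event of conditional probability at most $(\beta_k/n)^d$; hence the total count $B_k$ of such placements over the window is stochastically dominated by $\mathrm{Bin}(nL,(\beta_k/n)^d)$ with mean $L\beta_k^d/n^{d-1}$. Each such placement can push at most one new bin to level $\geq k+1$, so $M_{k+1}^{nL}\leq B_k$, and a Chernoff bound tuned so that each layer fails with probability $\leq n^{-c}/\log\log n$ yields a recursion of the shape $\beta_{k+1}\leq 2L\beta_k^d/n^{d-1}+\gamma_1(c)\log n$. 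Writing $\mu_k:=L\beta_k/n$, the leading term becomes $\mu_{k+1}\lesssim 2\mu_k^d/L^{d-2}$, with starting value $\mu_\ell=1/(8L^2)<1$; this is doubly exponential in $k$ and drives $\beta_k$ below $1$ after $\log_d\log n+O(1)$ layers. At that point $M_{k^\ast}^{nL}=0$, i.e.\ $G^{t+L}<k^\ast=\log_d\log n+\ell+\gamma$, and summing the failure contributions of the conditioning, the Markov step, and $O(\log_d\log n)$ Chernoff bounds reproduces the lemma.

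The main obstacle I expect is the additive $\gamma_1(c)\log n$ Chernoff residual: once $\beta_k$ shrinks to order $\log n$, the doubly-exponential decay stalls, and a naive Chernoff argument leaves $\beta_k$ pinned there. The fix is to switch from Chernoff to a direct first-moment bound in this regime, using the hypothesis $L\leq n^{1/4}$: once $\beta_k=O(\log n)$, the expected number of level-$\geq k$ placements is $L\beta_k^d/n^{d-1}=O(\log^d n/n^{d-1-1/4})$, so by Markov we get $M_{k+1}^{nL}=0$ except on an event of polynomially small probability, and a bounded number of additional padding layers boosts this to $n^{-c}$. The constant $\gamma(c)$ in the statement is exactly what must be paid to absorb this stall plus the per-layer Chernoff slack.
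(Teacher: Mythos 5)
Your proposal is correct and follows essentially the same route as the paper's proof: condition on $G^t<L$ so that all old balls lie below the new average, seed the layer at height $\ell$ with fraction $1/(8L^3)$ via Theorem~\ref{thm:potentialbound} plus Markov, run the ABKU-style layered induction with the recursion $\beta_{k+1}\approx 2L\beta_k^d$ clamped at a $\Theta(\log n /n)$ floor for $\log_d\log n+O(1)$ layers, and finish the top levels using $L\le n^{1/4}$. Your ``padding layers'' endgame is the same as the paper's argument that no bin above height $i_H$ receives more than $O(c)$ further balls; just be sure to phrase it as a joint binomial-tail bound (reaching $j$ levels above the stall requires $j$ placements into the $O(\log n)$ top bins, probability at most $(L\beta_k^2/n)^j\le n^{-c}$ for $j=O(c)$), since iterating single-layer Markov bounds would only add, not multiply, the polynomially small failure probabilities.
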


The lemma is relatively straightforward to prove using the layered induction technique, except that we need a non-trivial ``base case'' to start the layered induction. Theorem~\ref{thm:potentialbound} provides us with such a base case, for bins with $\ell$ more balls than average in $X^{t+L}$. For a specific ball to increase the number of balls in a bin from $i$ to $i+1$, it must pick two bins that already contain at least $i$ balls. If the fraction of bins with at least $i$ balls when this ball is placed is at most $\beta_i$, then this probability would be $\beta_i^2$. While this $\beta_i$ value is a function of time, it is monotonically increasing and using the final $\beta_i$ value would give us an upper bound on the probability of increase. We get such a bound for the base case using our potential function bound, and use induction and Chernoff bounds to conclude that the gap is likely to be small. We next give the details of such an argument.

\begin{proof}
We sample an allocation $X^t$ and let $G^t$ be its gap. Now take an additional $L$ steps of the Markov chain to obtain $X^{t+L}$: in other words, we throw an additional $nL$ balls using the $d$-choice process. For brevity, we will use $X,G,X',G'$ to denote $X^t,G^t,X^{t+L},G^{t+L}$ respectively. We condition on $G < L$ and we prove the bound for $G'$. Let $L' = \log_d \log n + \ell + \gamma$. Observe that:
\begin{align}
\Pr[G' \geq L'] \leq  \Pr[G' \geq L'~|~ G < L] + \Pr[G \geq L]
\end{align}
It thus suffices to prove that $\Pr[G' \geq L'~|~ G < L] \leq 8bL^3/\exp(a\ell) + n^{-c}$. We do this using a layered induction similar to the one in~\cite{ABKU99}.

Let $\nu_i$ be the fraction of bins with load at least $i$ in $X'$, we will define a series of numbers $\beta_i$ such that $\nu_i \leq \beta_i$ with high probability. For convenience, let the balls existing in $X$ be black, and let the new $nL$ balls thrown be \red. We define the \emph{height} of a ball to be the load of the bin in which it was placed relative to $X'$. Let $\mu_i$ be the number of balls (out of the $nL$ \red\ balls thrown) that fall at height greater than $i$ in $X'$. Note that since a total of $nL$ \red\ balls are thrown, the average increases by $L$, so in order for a black ball to be in height $i$ in $X'$ it had to had been placed in a bin of load $L+i$ in $X$.  The main observation is that conditioned on $G < L$, no black ball is in a bin with load more than $L$ in $X$ and therefore all black balls are below the average of $X'$.  So, for any $i \geq 0$, it must be that $\nu_{i}n \leq \mu_i$.

\begin{figure}[t]
\centering
\includegraphics[scale=0.5]{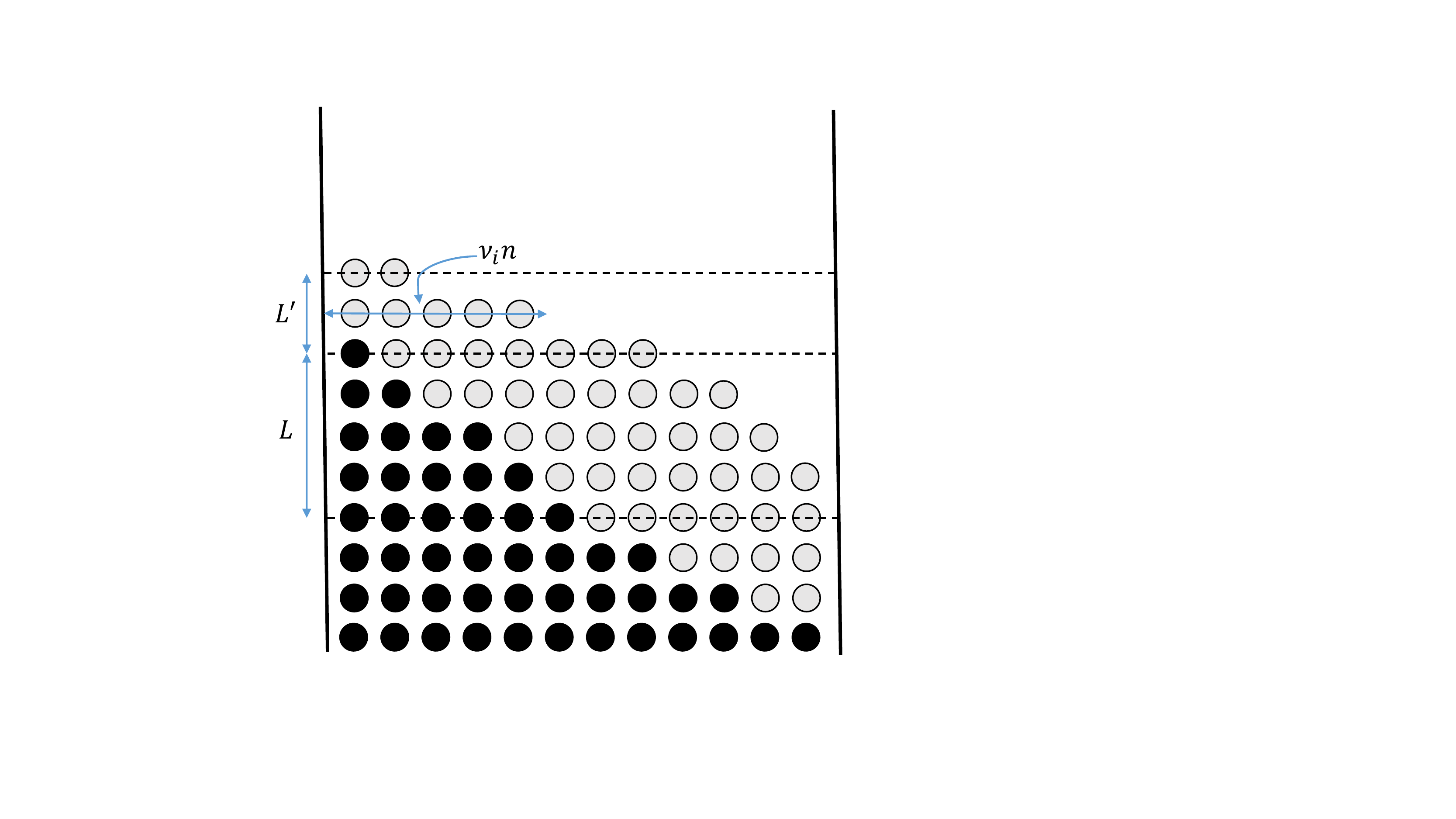}
\caption{Black balls are in $X$, $nL$ \red\ balls are thrown to obtain $X'$ }
\label{fig:allocation}
\end{figure}

By Theorem~\ref{thm:potentialbound} and Markov's inequality, $\Pr[\nu_{\ell} \geq \frac{1}{8L^{3/(d-1)}}] \leq \frac{8bL^{3/(d-1)}}{\exp(a\ell)}$, so we can set $\beta_\ell = \frac{1}{8L^{3/(d-1)}}$ as the base of the layered induction. By the standard layered induction argument we have that w.h.p $\nu_{i+1} \leq \mu_{i+1}/n \leq 2L\beta_i^d$ and so we set $\beta_{i+1}= 2L\beta_i^d$. Since $\beta_{i} < \frac{1}{8L^{3/(d-1)}}$ for $i \geq \ell$, the multiplicative term of $L$ has little impact, and we can derive the claimed bound. For completeness, we give details below. For ease of notation we assume $d=2$, the generalization for any $d>1$ is trivial.

\medskip

Let $i_L= \ell$, $i_H = i_L+\log\log n$ and $c'=3(c+1)$. Let  $\beta_{i_L} = \frac{1}{8L^3}$ and $\beta_{i+1}= \max(2L\beta_i^2,2c'\log n / n)$ for $i=i_L,\ldots,i_H-1$. It is easy to check that $\beta_{i_H} = 2c'\log n/n$. Indeed the recurrence
\begin{align*}
\log \beta_{i_L} &= -3\log (2L),\\
\log \beta_{i+1} &= 2\log \beta_{i} + \log (2L)
\end{align*}
solves to $\log \beta_{i_L+k} = \log (2L) (-3\cdot 2^k + (2^k-1))$, which implies the claim. The inductive step in the layered induction is the following:
\begin{lemma}
For any $i \in [i_L,i_H-1]$, we have $\Pr[\nu_{i+i} > \beta_{i+1} \mid \nu_i \leq \beta_i] \leq \frac{1}{n^{c+1}}$.
\end{lemma}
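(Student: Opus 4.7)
My plan is to first reduce from $\nu_{i+1}$ to the count $\mu_{i+1}$ using the inequality $\nu_{i+1} n \leq \mu_{i+1}$ established in the preceding discussion, then to bound $\mu_{i+1}$ via a stopping-time envelope argument that copes with the awkward conditioning on the future event $\{\nu_i \leq \beta_i\}$, and finally to close with a routine Chernoff bound using the choice $c' = 3(c+1)$.

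For the domination step I would introduce $\rho_i(\tau)$, the fraction of bins whose absolute load is at least $i + t + L$ after the first $\tau$ of the $nL$ red balls have been thrown. Bin loads only grow, so $\rho_i$ is monotone non-decreasing and $\rho_i(nL) = \nu_i$; hence on the event $\{\nu_i \leq \beta_i\}$ we have $\rho_i(\tau) \leq \beta_i$ for every $\tau$. Define $\tau^* \eqdef \inf\{\tau : \rho_i(\tau) > \beta_i\}$. A red ball thrown at time $\tau < \tau^*$ ends up at height $\geq i+1$ in $X'$ only if both of its two independently sampled bins currently have absolute load at least $i + t + L$, which happens with conditional probability at most $\rho_i(\tau)^2 \leq \beta_i^2$. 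Thus the truncated count $\tilde\mu_{i+1}$ of such red balls thrown strictly before $\tau^*$ is stochastically dominated by $\mathrm{Binomial}(nL, \beta_i^2)$, and on $\{\nu_i \leq \beta_i\}$ we have $\tilde\mu_{i+1} = \mu_{i+1}$, so any tail bound on $\tilde\mu_{i+1}$ transfers.

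What remains is a Chernoff calculation, split according to which term attains the maximum in $\beta_{i+1} = \max(2L\beta_i^2,\, 2c'\log n / n)$. In the first regime $\Ex[\tilde\mu_{i+1}] \leq nL\beta_i^2$ and $nL\beta_i^2 \geq c'\log n$, so the multiplicative Chernoff bound at the threshold $n\beta_{i+1} = 2nL\beta_i^2 = 2\Ex[\tilde\mu_{i+1}]$ gives probability at most $\exp(-nL\beta_i^2/3) \leq n^{-c'/3} = n^{-(c+1)}$. In the second regime $\Ex[\tilde\mu_{i+1}] < c'\log n$ while the threshold is $n\beta_{i+1} = 2c'\log n$, at least twice the expectation, and the same Chernoff bound yields $\exp(-c'\log n/3) = n^{-(c+1)}$. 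I expect the only nontrivial step to be the envelope construction: naively conditioning on the future event $\{\nu_i \leq \beta_i\}$ can perturb the joint distribution of ball placements and destroy the independence required for Chernoff, and the truncation at $\tau^*$ is exactly what lets us replace that conditioning with an unconditional tail bound on a sum of independent Bernoullis.
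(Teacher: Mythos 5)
Your proof is correct and takes essentially the same route as the paper: bound the per-ball probability of landing at height at least $i+1$ by $\beta_i^2$ using monotonicity of the fraction of occupied bins, then apply a Chernoff bound with $c' = 3(c+1)$ at the threshold $n\beta_{i+1} \geq 2\max(nL\beta_i^2,\, c'\log n)$. Your stopping-time envelope is just a more careful rendering of the conditioning that the paper handles informally (``under our conditioning''), and your two-regime split of the $\max$ in $\beta_{i+1}$ matches the paper's observation that $n\beta_{i+1}/2$ bounds both the expectation and $c'\log n$.
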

\begin{proof}
For a ball to fall at height at least $i+1$, it should pick two bins that have load at least $i$ when the ball is placed, and hence at least as much in $X'$. Thus the probability that a ball falls at height at least $i+1$  is at most $\nu_{i}^2 \leq \beta_i^2$ under our conditioning. Since we place $nL$ balls, the expected number of balls that fall at height at least $i+1$ is bounded by  $nL\beta_i^2 \leq n\beta_{i+1}/2$. Finally, since this number is at least $c'\log n$, Chernoff bounds imply that the probability that we get twice the expectation is at most $\exp(-c'\log n/3) \leq 1/n^{c+1}$. The claim follows.
\end{proof}

It follows that $\Pr[\nu_{i_H} > \beta_{i_H}~|~ G < L] \leq 8bL^3/\exp(a\ell) + i_H/n^{c+1}$. Now we condition on $\nu_{i_H} \leq \beta_{i_H}$, and let $H$ be the set of bins of height at least $i_H$ in $X'$. Once a bin reaches this height, an additional ball falls in it with probability at most $(2\beta_{i_H} n+1)/n^2$. Thus the expected number of balls falling in such a bin is $O(L\log n / n)$. The probability that any bin in $H$ gets $2c$ balls after reaching height $i_H$ is then at most $O(\log n \exp(-\Omega(c^2n/3L\log n)) \leq 1/n^{c+1}$ for large enough $n$. The claim follows.
\end{proof}

This lemma allows us to bound $\Pr[G^{t+L} \geq \log\log n + O(\log L) ]$ by $\Pr[G^t \geq L] + \frac{1}{poly(L)}$. Since $\Pr[G^t \geq O(\log n)]$ is small, we can conclude that $\Pr[G^{t+O(\log n)}\geq O(\log\log n)]$ is small. Another application of the lemma then gives that $\Pr[G^{t+O(\log n) + O(\log\log n)} \geq \log\log n + O(\log\log\log n)]$ is small. We formalize these corollaries next.
\begin{corollary}
\label{cor:logscale}
There is a universal constant $\gamma$ such that for any $k \geq 0$, $t \geq (12\log n)/a$, $\Pr[G^{t} \geq (5+\frac{10}{a})\cdot \log\log n + k + \gamma] \leq \frac{1}{n^{10}}+\frac{\exp(-ak)}{\log^4 n} $.
\end{corollary}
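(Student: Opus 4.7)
The plan is to derive the corollary from a single application of Lemma~\ref{lem:inductionstep} sitting on top of the crude $O(\log n)$ bound of Lemma~\ref{lem:logbound}. The guiding idea is to let the parameter $\ell$ of the inductive step carry the $k$-dependence: setting $\ell = k + c_1\log\log n$ turns the middle error term $8bL^3/\exp(a\ell)$ into $(8bL^3/\log^{ac_1} n)\cdot\exp(-ak)$, and any choice $c_1$ slightly larger than $7/a$ makes this at most $\exp(-ak)/\log^4 n$ for $n$ large whenever $L = \Theta(\log n)$, which is precisely the tail behavior the corollary promises.

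To execute this for a given $t \geq 12\log n/a$, I would set $t_0 = t - L_0$ with $L_0 = 12\log n/a$, so that $t_0 \geq 0$ and $L_0 \leq n^{1/4}$ for large $n$. First, Lemma~\ref{lem:logbound} with $c = 12$ gives $\Pr[G^{t_0} \geq L_0] \leq bn/n^{12} \leq n^{-11}$. Next, I apply Lemma~\ref{lem:inductionstep} at time $t_0$ with this $L$, with the lemma's slack constant taken to be $11$ (so that the $n^{-c}$ term is $n^{-11}$), and with $\ell = k + c_1\log\log n$ for some fixed $c_1$ slightly above $7/a$ (say $c_1 = 8/a$). This produces
\[
\Pr\big[G^t \geq \log_d\log n + k + c_1\log\log n + \gamma_0\big] \leq 2n^{-11} + \frac{\exp(-ak)}{\log^4 n}
\]
for $n$ sufficiently large. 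Since $\log_d\log n + c_1\log\log n \leq (5+10/a)\log\log n$ for any $d \geq 2$, this handles all $k$ for which the hypothesis $\ell \leq L_0$ of the inductive step holds, i.e., $k$ up to roughly $12\log n/a$.

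For $k$ beyond that range, the target gap $(5 + 10/a)\log\log n + k + \gamma$ already exceeds $11\log n/a$, so a direct application of Lemma~\ref{lem:logbound} with $c = 11$ yields a failure probability of at most $n^{-10}$, which is swallowed by the corollary's right-hand side. Patching the two regimes with a single $\gamma$ chosen as the maximum of the constants that arise completes the plan. The main obstacle is pure constant chasing: verifying that the multiplicative factor $8b\cdot (12/a)^3$ coming from $L_0^3$ is ultimately dominated by the extra slack $\log^{ac_1 - 7} n$ so that the tail reads exactly $1/\log^4 n$ with no leftover constant, and that the two error terms $2n^{-11}$ and $n^{-10}$ collapse cleanly into the single $1/n^{10}$ of the corollary.
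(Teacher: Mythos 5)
Your proposal is correct and follows essentially the same route as the paper's own one-line proof: apply Lemma~\ref{lem:inductionstep} at time $t-L$ with $L=12\log n/a$, bound $\Pr[G^{t-L}\geq L]$ via Lemma~\ref{lem:logbound}, and choose $\ell = k + \Theta(\log\log n)$ so the middle error term becomes $\exp(-ak)/\log^4 n$ (the paper picks $\ell = k+\log(8bL^3\log^4 n)/a$ to cancel the constant exactly, while you absorb it by taking $n$ large). Your explicit treatment of the regime where $k$ is so large that $\ell\leq L$ fails, falling back on Lemma~\ref{lem:logbound} directly, is a small point the paper leaves implicit, and it is handled correctly.
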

\begin{proof}
Set $L=12\log n / a$, and use Lemma~\ref{lem:logbound} to bound $\Pr[G^{t-L}\geq L]$. Set $\ell = k+\log(8bL^3\log^4 n)/a$ to derive the result.
\end{proof}
\begin{corollary}
\label{cor:loglogscale}
There are universal constants $\gamma,\alpha$ such that for any $k\geq 0$, $t \geq \omega(\log n)$, $\Pr[G^t \geq \log\log n + \alpha\log\log\log n + k + \gamma] \leq \frac{1}{n^{10}}+\frac{1}{\log^4 n}+\frac{\exp(-ak)}{(\log\log n)^4}$.
\end{corollary}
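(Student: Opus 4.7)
The plan is to apply Lemma \ref{lem:inductionstep} one more time, this time using Corollary \ref{cor:logscale} (rather than Lemma \ref{lem:logbound}) to supply the base-case probability bound. Concretely, I would set $L := (5+\tfrac{10}{a})\log\log n + k + \gamma_1$, where $\gamma_1$ is the constant from Corollary \ref{cor:logscale}. Since $t = \omega(\log n)$ while $L = O(\log\log n + k)$, for all $k = O(\log n)$ the time $t-L$ still exceeds $(12\log n)/a$, so Corollary \ref{cor:logscale} applied at time $t-L$ with parameter $k$ yields $\Pr[G^{t-L} \geq L] \leq \tfrac{1}{n^{10}} + \tfrac{\exp(-ak)}{\log^4 n}$.

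Next, I would invoke Lemma \ref{lem:inductionstep} at time $t-L$ with this $L$, inner parameter $c = 10$, and the calibration
\begin{align*}
\ell \;:=\; k + \tfrac{1}{a}\log\!\bigl(8bL^3(\log\log n)^4\bigr).
\end{align*}
This choice is tuned so that $8bL^3/\exp(a\ell)$ collapses to exactly $\exp(-ak)/(\log\log n)^4$, matching the third term of the target bound. Substituting the bound on $\Pr[G^{t-L}\geq L]$ into the lemma's conclusion then gives an estimate of the shape $\Pr[G^t \geq \log\log n + \ell + \gamma_2] \leq \tfrac{2}{n^{10}} + \tfrac{\exp(-ak)}{\log^4 n} + \tfrac{\exp(-ak)}{(\log\log n)^4}$, where $\gamma_2$ is the additive constant output by Lemma \ref{lem:inductionstep}. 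Since $\ell - k = \tfrac{1}{a}\log(8bL^3(\log\log n)^4) = O(\log\log\log n + \log(k+1))$, a sufficiently large universal $\alpha$ absorbs this into $\alpha\log\log\log n$, while the remaining additive constants are folded into the new $\gamma$.

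The main sanity checks are the hypotheses $\ell \leq L \leq n^{1/4}$ of Lemma \ref{lem:inductionstep}: the quantity $\ell - k$ grows only polylogarithmically in $n$ and $k$, while $L - k$ grows as $\Theta(\log\log n)$, so $\ell \leq L$ holds uniformly in the interesting regime, and $L \leq n^{1/4}$ is immediate there. The only regime this argument does not cover is $k = \Omega(\log n)$, where the threshold $\log\log n + \alpha\log\log\log n + k + \gamma$ already exceeds $\gamma(10)\log n$, so the target bound follows from Lemma \ref{lem:logbound} applied with $c = 10$ directly. I expect the main obstacle to be not conceptual but bookkeeping: tracking three nested constants ($\gamma_1$ from Corollary \ref{cor:logscale}, $\gamma_2$ from Lemma \ref{lem:inductionstep}, and $\alpha,\gamma$ in the statement), and verifying that a single universal choice works uniformly in both $n$ and $k$, particularly because the $\log(k+1)$ contribution to $\ell - k$ must be dominated by $\alpha\log\log\log n$ throughout the range $0 \leq k \leq O(\log n)$.
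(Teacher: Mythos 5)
Your route is the paper's route: apply Lemma~\ref{lem:inductionstep} one more time with Corollary~\ref{cor:logscale} supplying the base-case probability, and calibrate $\ell = k + \tfrac{1}{a}\log\bigl(8bL^3(\log\log n)^4\bigr)$ so that the lemma's middle term becomes $\exp(-ak)/(\log\log n)^4$. The one substantive difference is the choice of $L$: the paper takes the $k$-independent value $L=\tfrac{1}{a}\log\bigl(8b(12\log n/a)^3\log^4 n\bigr)=\tfrac{7}{a}\log\log n+O(1)$ and invokes Corollary~\ref{cor:logscale} with $k=0$ (which is exactly why the stated bound carries $1/\log^4 n$ rather than your stronger $\exp(-ak)/\log^4 n$ as its middle term), whereas you take $L=(5+\tfrac{10}{a})\log\log n+k+\gamma_1$ and invoke it at parameter $k$.

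That $k$-dependence is where your writeup has a genuine (though small and fixable) gap. Since $L$ contains $k$, your $\ell-k$ contains $\tfrac{3}{a}\log L=\Theta(\log(k+\log\log n))$, and you need this to be at most $\alpha\log\log\log n+O(1)$ for a universal $\alpha$. That holds only for $k\leq(\log\log n)^{O(1)}$; for, say, $k=\sqrt{\log n}$ it is $\Theta(\log\log n)$, so the threshold you actually prove a bound at exceeds the claimed threshold by $\Theta(\log\log n)$, and no universal $\alpha$ repairs it. Your fallback via Lemma~\ref{lem:logbound} with $c=10$ only kicks in once the threshold exceeds $\gamma(10)\log n$, i.e.\ for $k\gtrsim\gamma(10)\log n$, so the intermediate range, roughly $(\log\log n)^{\omega(1)}\leq k\leq o(\log n)$, is covered by neither branch; your closing remark that only $k=\Omega(\log n)$ is missing understates this. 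The fix is easy with tools you already have: either switch to the paper's $k$-independent $L$, or observe that $\Pr[G^t\geq\log\log n+\alpha\log\log\log n+k+\gamma]$ is nonincreasing in $k$, so for all $k\geq k_1:=\tfrac{4}{a}\log\log n$ the bound you have already proved at $k_1$, namely $\tfrac{2}{n^{10}}+\tfrac{1}{\log^8 n}+\tfrac{1}{\log^4 n(\log\log n)^4}\leq\tfrac{1}{n^{10}}+\tfrac{1}{\log^4 n}$, already implies the claim for every larger $k$. (To be fair, the paper's own two-line proof is also silent about large $k$, since with its fixed $L$ the hypothesis $\ell\leq L$ holds only for $k=O(\log\log n)$; the same monotonicity remark repairs both.) Finally, note the lemma's extra $n^{-c}$ term gives you $\tfrac{2}{n^{10}}$ rather than $\tfrac{1}{n^{10}}$; this is absorbed because your middle term $\exp(-ak)/\log^4 n$ is stronger than the claimed $1/\log^4 n$.
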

\begin{proof}
Set $L=\log(8b(\frac{12\log n}{a})^3 \log^4 n)/a = \frac{7\log\log n}{a} + O_{a,b}(1)$ and use Corollary~\ref{cor:logscale} with $k$=0 to bound $\Pr[G^{t-L}\geq L]$. Set $\ell = k+\log(8bL^3(\log\log n)^4)/a$ to derive the result.
\end{proof}

Setting $k=0$ in Corollary~\ref{cor:loglogscale}, we conclude that
\begin{corollary}
There are universal constants $\gamma,\alpha$ such that for $t \geq \omega(\log n)$, $\Pr[G^t \geq \log\log n + \alpha\log\log\log n + \gamma] \leq \frac{2}{(\log\log n)^4}$.
\end{corollary}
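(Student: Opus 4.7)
The plan is essentially immediate: apply Corollary~\ref{cor:loglogscale} with $k=0$. Doing so yields, for the same universal constants $\alpha$ and $\gamma$ appearing there,
$$\Pr[G^t \geq \log\log n + \alpha\log\log\log n + \gamma] \leq \frac{1}{n^{10}} + \frac{1}{\log^4 n} + \frac{1}{(\log\log n)^4}.$$
So the entire content of this final corollary reduces to showing that the right-hand side is bounded above by $\frac{2}{(\log\log n)^4}$ (for a possibly enlarged $\gamma$).

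For that, I would just compare the three tail terms asymptotically. The term $\frac{1}{n^{10}}$ decays faster than any fixed polynomial in $\log\log n$, so it is negligible for large $n$. For the middle term, observe that $\frac{1}{\log^4 n} = \exp(-4\log\log n)$ whereas $\frac{1}{(\log\log n)^4} = \exp(-4\log\log\log n)$, so the former is exponentially (in $\log\log n$) smaller than the latter. Hence there is a fixed constant $n_0$ such that, for every $n \geq n_0$, the sum of the first two terms is at most $\frac{1}{(\log\log n)^4}$, which already gives the claimed bound of $\frac{2}{(\log\log n)^4}$. The finitely many values $n < n_0$ are handled by enlarging $\gamma$ by an additive constant large enough that $\log\log n + \alpha\log\log\log n + \gamma$ exceeds the trivial maximum possible gap for each such $n$, making the event in question impossible.

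I do not expect any real obstacle here: the substantive work has already been done in Theorem~\ref{thm:potentialbound}, Lemma~\ref{lem:inductionstep}, and Corollaries~\ref{cor:logscale}--\ref{cor:loglogscale}, and this last statement is a purely syntactic packaging of Corollary~\ref{cor:loglogscale}. The only mild care needed is to ensure that $\alpha$ and $\gamma$ can be chosen uniformly in $n$; this comes for free since Corollary~\ref{cor:loglogscale} already supplies universal constants, and the adjustment for small $n$ only inflates $\gamma$ by a constant.
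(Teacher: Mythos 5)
Your proposal is correct and is exactly the paper's route: the paper obtains this corollary simply by setting $k=0$ in Corollary~\ref{cor:loglogscale}, and your extra bookkeeping (absorbing $\frac{1}{n^{10}}+\frac{1}{\log^4 n}$ into one additional $\frac{1}{(\log\log n)^4}$ term for large $n$) just makes explicit what the paper leaves implicit. The only cosmetic caveat is your treatment of small $n$ via a deterministic bound on the gap, which is not quite available for arbitrary $t$, but this is immaterial since the statement is asymptotic and trivial once $(\log\log n)^4\leq 2$.
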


Using the above results, we can also conclude
\begin{corollary}
There are universal constants $\gamma,\alpha$ such that for $t \geq \omega(\log n)$ $\Ex[G^t] \leq \log\log n + \alpha\log\log\log n + \gamma$.
\end{corollary}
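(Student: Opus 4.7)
The natural route is to convert the tail bound from the preceding corollary into an expectation bound via the identity $\Ex[G^t] = \int_0^\infty \Pr[G^t \geq x]\, \ud x$, splitting the range of integration into three regions and estimating each. Set $A = \log\log n + \alpha\log\log\log n + \gamma$ (the threshold from Corollary~\ref{cor:loglogscale}); the contribution from $x \in [0,A]$ is trivially at most $A$ since probabilities are bounded by $1$.

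For the middle range $x \in [A, C\log n]$ (where $C$ is chosen so that the tail beyond this point is negligible), I would apply Corollary~\ref{cor:loglogscale} with $k = x - A \geq 0$, giving $\Pr[G^t \geq x] \leq \frac{1}{n^{10}} + \frac{1}{\log^4 n} + \frac{\exp(-a(x-A))}{(\log\log n)^4}$. Integrating over $x \in [A, C\log n]$, the first term contributes $O(\log n / n^{10}) = o(1)$; the second contributes $O(C\log n/\log^4 n) = o(1)$; and the third contributes at most $\frac{1}{a(\log\log n)^4} = o(1)$.

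For the far tail $x > C\log n$, Corollary~\ref{cor:loglogscale} is too weak (the $\frac{1}{\log^4 n}$ term does not decay), so I would instead fall back on Lemma~\ref{lem:logbound}, which via the Markov bound underlying Theorem~\ref{thm:potentialbound} gives $\Pr[G^t \geq x] \leq b\exp(\log n - ax)$ uniformly in $x$. Choosing $C$ with $aC \geq 3$ makes $\int_{C\log n}^\infty b\exp(\log n - ax)\, \ud x = \frac{b}{a} n^{1-aC} = o(1)$. Summing the three contributions yields $\Ex[G^t] \leq A + o(1)$, and absorbing the $o(1)$ into the additive constant $\gamma$ (which is already universal) proves the statement.

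The argument is essentially routine integration of tail bounds; the only mild subtlety is recognising that Corollary~\ref{cor:loglogscale} alone does not suffice for the far tail because of its $n$-dependent additive terms, which is why one must patch in the cruder but faster-decaying bound of Lemma~\ref{lem:logbound}. Choosing the splitting point $C\log n$ correctly so that both bounds yield $o(1)$ contributions is the only genuine step requiring care.
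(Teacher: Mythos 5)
Your proposal is correct in outline and follows essentially the same route as the paper: write $\Ex[G^t]$ as an integral of the tail, pay $\log\log n + \alpha\log\log\log n + \gamma$ for the initial segment, and show the remaining integral is $O(1)$ by splitting the range and invoking the tail bounds already proved, falling back on Lemma~\ref{lem:logbound} beyond $\Theta(\log n)$ where the $n$-dependent additive terms would otherwise dominate. The one difference is that you use a three-way split, stretching Corollary~\ref{cor:loglogscale} over the whole middle range $[A, C\log n]$, whereas the paper uses a four-way split, applying Corollary~\ref{cor:loglogscale} only up to $\ell_2 = (5+\frac{10}{a})\log\log n + \gamma_2$ and then Corollary~\ref{cor:logscale} on $[\ell_2, \frac{12}{a}\log n]$. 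This matters because, although Corollary~\ref{cor:loglogscale} is stated ``for any $k\geq 0$'', its proof goes through Lemma~\ref{lem:inductionstep} with $L = \frac{7\log\log n}{a}+O(1)$ and $\ell = k + O(\log\log\log n)$, and that lemma requires $\ell \leq L$; so the corollary is actually only supported for $k = O(\log\log n)$, not for $k$ as large as $C\log n$ as your middle-range application needs. (Indeed, for $k = \Theta(\log n)$ the claimed bound $\exp(-ak)/(\log\log n)^4$ is strictly stronger than what Corollary~\ref{cor:logscale} or Lemma~\ref{lem:logbound} gives, so it is not recoverable from the other stated results either.) This is precisely why the paper interposes Corollary~\ref{cor:logscale}, whose proof (with $L = \frac{12\log n}{a}$) is valid for $k$ up to order $\log n$, before switching to Lemma~\ref{lem:logbound}. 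The repair is immediate and changes nothing quantitative: on $[\ell_2, C\log n]$ replace your use of Corollary~\ref{cor:loglogscale} by Corollary~\ref{cor:logscale}, whose terms $\frac{1}{n^{10}}$ and $\frac{\exp(-a(x-\ell_2))}{\log^4 n}$ integrate to $o(1)$ over that range, and your computation then goes through verbatim and yields the same conclusion as the paper's.
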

\begin{proof}
Let $\ell_1= \log\log n + \alpha\log\log\log n + \gamma_1$ for $\alpha,\gamma_1$ from Corollary~\ref{cor:loglogscale}, and let $\ell_2 =  (5+\frac{10}{a})\cdot \log\log n+\gamma_2$ for $\gamma_2$ from Corollary~\ref{cor:logscale}. Finally, let $\ell_3 = 12\log n/a$. We bound
\begin{align*}
\Ex[G^t] &\leq \ell_1 + \int_{\ell_1}^{\ell_2} \Pr[G^t \geq x] \,\ud x + \int_{\ell_2}^{\ell_3} \Pr[G^t\geq x] \,\ud x + \int_{\ell_3}^{\infty} \Pr[G^t \geq x] \,\ud x\\
\end{align*}
Each of the three integrals are bounded by constants, using Corollaries~\ref{cor:loglogscale} and \ref{cor:logscale} and Lemma~\ref{lem:logbound} respectively. The claim follows.
\end{proof}

The following lemma states that the lower bound condition on $t$ is unnecessary.

\begin{lemma}\label{lem:major}
For $t \geq t'$, $G^{t'}$ is stochastically dominated by $G^t$. Thus $\Ex[G^{t'}] \leq \Ex[G^t]$ and for every $k$,  $\Pr[G^{t'}\geq k] \leq \Pr[G^t \geq k]$.
\end{lemma}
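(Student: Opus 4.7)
The plan is to establish stochastic domination via a monotone-coupling argument with respect to the majorization partial order on load vectors. I would set up a coupling of the Markov chain in which majorization is preserved step by step: if two configurations $U,V$ (both zero-sum integer vectors) satisfy $\sum_{i=1}^k U_{(i)} \leq \sum_{i=1}^k V_{(i)}$ for every $k$ (when sorted in decreasing order), written $U \preceq V$, then running one step of the chain on each with shared randomness produces vectors $U', V'$ still satisfying $U' \preceq V'$.

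The coupling I have in mind is a ``rank coupling'': for each of the $n$ balls thrown in a step, sample $d$ ranks independently and uniformly from $[n]$, and in each chain place the ball at the largest of the sampled ranks (equivalently, in the least-loaded of the $d$ bins at those positions in the current sorted order). Conditional on the load vector this has the same marginal distribution as Greedy$[d]$, so it is a valid coupling. The core technical step is then to verify that a single-ball update under rank coupling preserves majorization. Writing $r^*(U)$ for the final position of the incremented entry of $U$ after re-sorting (which equals $r$ unless ties at positions $r-1, r-2, \ldots$ push it upward), the partial sum $S_k(U) = \sum_{i=1}^k U_{(i)}$ changes by exactly $\mathbf{1}[k \geq r^*(U)]$, and similarly for $V$. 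The only case in which $U \preceq V$ could fail to be preserved is $r^*(V) > k \geq r^*(U)$; a short case analysis using the integrality of the loads then shows that $S_k(V) - S_k(U) \geq 1$ already held, absorbing the unit change. This case analysis is the main obstacle, but it reduces to elementary inequalities on partial sums.

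With step-wise monotonicity in hand, I would couple two $t'$-step runs of the chain using the same random bits, one started at $X^0 = 0$ and the other at $X^{t-t'}$. By the Markov property the second run's terminal state is distributed as $X^t$, while the first's is distributed as $X^{t'}$. Since the zero vector is the minimum of the majorization order among zero-sum integer vectors, $0 \preceq X^{t-t'}$ in every realization, and the coupling then forces $X^{t'} \preceq X^t$ almost surely. Because the top entry is a monotone functional under majorization, this yields $G^{t'} \leq G^t$ pointwise in the coupling, which is exactly the desired stochastic domination; the statements about expectations and tail probabilities follow immediately from the definition of stochastic dominance.
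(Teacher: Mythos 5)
Your proposal is correct and takes essentially the same route as the paper's proof sketch: couple the two runs via the standard rank coupling, use that majorization is preserved step by step starting from the trivial relation $X^0 \preceq X^{t-t'}$, conclude $X^{t'} \preceq X^t$, and read off stochastic dominance of the maximum entry. You merely fill in the coupling and case-analysis details that the paper delegates to the reference \cite{ABKU99}.
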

\begin{proof}[Proof sketch.]
We use the notion of majorization, which is a variant of stochastic dominance. See for example \cite{ABKU99} for definitions. Observe that trivially $X^0$ is majorized by $X^{t-t'}$. Now throw $nt'$ balls using the standard coupling and get that $X^{t'}$ is majorized by $X^{t}$. The definition of majorization implies the stochastic dominance of the maximum and the bounds on the expectation and the tail follow.
\end{proof}
}
\section{Extensions}
The technique we use naturally extends to other settings.
\subsection{The Weighted Case}
Previously we assumed all balls are of unit weight. For the case of varying weights we use the model proposed in \cite{TW07} and also used in \cite{PTW10}. Every ball comes with a weight $W$ independently sampled from a weight distribution $\mathcal W$. Without loss of generality we assume $\Ex[\mathcal W]=1$. The weight of a bin is the sum of weights of balls assigned to it. The \emph{gap} is naturally defined as the difference between the weight of the heaviest bin and the average bin. In \cite{TW07} it is shown that if $\mathcal W$ has a bounded second moment and satisfies some additional mild smoothness condition, then the expected gap does not depend on the number of balls. The paper does not provide any explicit bounds on the gap though. In \cite{PTW10} it is shown that if $\mathcal W$ has a finite exponential generating function the gap is bounded by $O(\log n)$. For some distributions, such as the exponential distribution, this bound is tight. Here we can show that if $\mathcal W$ is very concentrated (for instance it is bounded) then better bounds can be proved.

Consider for example the case where the size of each ball is drawn uniformly from $\{1,2\}$. Previous techniques such as~\cite{BCSV00} fail to prove an $O(\log\log n)$ bound in this case, and the best bound prior to this work is the $O(\log n)$ via the potential function argument of~\cite{PTW10}. The fact that Theorem~\ref{thm:potentialbound} holds means that the technique of this paper can be applied. Moreover, the layered induction still works if we go up in steps of size two instead of one. This shows a bound of $2\log_d\log n + O(1)$ for this distribution.

More generally, for a weight distribution $W$ with a bounded exponential moment generating function, let $M_s$ be the smallest value such that $\Pr[W \geq M_s] \leq \frac{1}{s(\log\log n)^5}$. Then a proof analogous to Lemma~\ref{lem:inductionstep} shows that the gap is $O(\log\log n) + \sum_{i=i_L}^{i_H} M_{\beta_{i}n}$. If $M_n$ is $\omega(\log\log n)$, then this is $O(M_n)$, which is tight up to constants. We note however that this proof leaves a ``hole'': since majorization does not necessarily hold in the weighted case, our approach proves the bound on the gap when $\Omega(n\log n)$ balls are thrown.
\subsection{The Left$[d]$ Scheme}
Next we sketch how this approach also proves a tight bound for V\"{o}cking's Left$[d]$ process~\cite{Voc03}. The result had been shown in \cite{BCSV00}, though there they had to redo large sections of the proof (and the most technical at that), while here we only require minor changes. Recall that in Left$[d]$ the bins are partitioned into $d$ sets of $n/d$ bins each (we assume $n$ is divisible by $d$). When placing a ball,  one bin is sampled uniformly from each set and the ball is placed in the least loaded of the $d$ bins. The surprising feature of this process is that ties are broken according to a fixed ordering of the sets (we think of the sets as ordered from left to right and ties are broken 'to the left', hence the name of the scheme). The surprising result is that the gap now drops from $\frac{\log\log n}{\log d}$ to $\frac{\log \log n}{d\ln\phi_d}$ where $\phi_d=\lim_{k\rightarrow\infty}(F_d(k))^{\frac{1}{k}} \in [1.61,2)$ is the base of the order $d$ Fibonacci number.

The key ingredient in the proof is Theorem \ref{thm:potentialbound} from \cite{PTW10}. The exponential potential function is Schur-Convex and therefore the theorem holds for any process which is majorized by the Greedy$[d]$ process. It is indeed the case that V\"{o}cking's Left$[d]$ process~\cite{Voc03} is majorized by Greedy$[d]$ (see the proof in~\cite{BCSV00}). All that remains is to prove the analog of Lemma~\ref{lem:inductionstep}. For this we follow the analysis of Mitzenmacher and V\"{o}cking in~\cite{MV98}.
Let $X_{jd+k}$ be the number of bins of load at least $j$ from the $k$'th set, and set $x_i = X_i/n$. It is easy to verify the recursive equation
$$\Ex[x_i | x_{< i}] \leq d^d \prod _{j=i-d}^{i-1} x_j$$
From here the proof is similar to that of Lemma~\ref{lem:inductionstep}.

\section{Discussion}

The theorem in \cite{BCSV00} states that for every $c$ there is a $\gamma = \gamma(c)$ so that $\Pr[G>\log \log n  + \gamma]\leq n^{-c}$. The reason  our techniques do not show such a sharp bound is that we do not obtain a small enough tail for the base case of the layered induction, i.e.  on  $\Pr[\nu_\ell \leq \beta_{\ell}]$. The reason is that the exponential potential function in  Theorem~\ref{thm:potentialbound} is not concentrated enough to yield such a bound. This presents a substantial obstacle, it seems that a different technique is needed in order to recover the results in \cite{BCSV00} at full strength.

\medskip

An interesting corollary from Theorem \ref{thm:potentialbound} is that the Markov chain  $X^t$ has a stationary distribution and that the bounds we prove hold also for the stationary distribution itself. In that sense, while in \cite{BCSV00} the mixing of the chain was used to move the interesting events to be closer to the "present", in our technique we allow ourselves to look directly at the distant "future". When balls are unweighted a simple majorization based argument shows that moving closer in time can only improve the bounds on the gap (this is Lemma~\ref{lem:major}). Unfortunately, a similar Lemma does not hold when balls are weighted (see \cite{BFHR08}), so we need to be specify the time periods we look at. Indeed, while our results hold when considering a large number of balls, we have a 'hole' for a number of balls that is smaller than $n\log n$.



\appendix
\section{Potential Function}\label{sec:potential}
In order to make the writeup self contained we next provide a proof of Theorem~\ref{thm:potentialbound}.

It would be convenient to define the load vector $x(t)$ to the sorted vector of gaps after $t$ balls are thrown, where $t$ is not necessarily a multiple of $n$, as in the previous section. In other words, $x(t)_i$ is the difference between the number of balls in the $i$th most loaded bin and the average $t/n$. Note that in the notation of the previous section, $X^t$ is $x(nt)$. The load of a bin now is not necessarily an integer. We define $p_i$ to be the probability the $i$'th loaded bin receives a ball, so $p_i = \left(\tfrac{i}{n}\right)^d - \left(\tfrac{i-1}{n}\right)^d$. Recall that we also have a weight distribution $\mathcal W$. The Markov chain is thus the following:

\begin{itemize}
\item sample $j \in_{\mathbf p} [n]$, i.e. pick $j$ with probability $p_j$.
\item sample $W \in \mathcal{W}$
\item set $y_i = x(t)_i + W - \tfrac{W}{n}$ for $i=j$ and  $z_i = x(t)_i - \tfrac{W}{n}$ for $i\neq j$
\item obtain $x(t+1)$ by sorting $y$
\end{itemize}

We make the following two observations which hold whenever $d>1$. It turns out to be all we need:

\begin{description}
\item \begin{align}\label{assume1}p_i \leq p_{i+1} \mbox{ for } i\in [n-1]\end{align}

\item For some $\e>0$ it holds that
\begin{align}
\label{assume2}
\sum_{i\geq \frac{3n}{4}} p_i \geq \tfrac{1}{4} + \e \;\;&\text{ and } \sum_{i\leq \frac{n}{4}} p_i \leq \tfrac{1}{4} - \e\end{align}
\end{description}

For the distribution $\mathcal W$, we assume that there is a $\lambda >0$ such that the moment generating function $ M[\lambda] = \Ex[e^{\lambda W}] < \infty$. Further, without loss of generality, $\Ex[W]=1$.
Note that
$$M''(z) = \Ex[W^2 e^{z W}] \leq \sqrt{\Ex[W^4]\Ex[e^{2z W}]}.$$
The above assumption implies that there is an $S\geq 1$ such that for every $|z| < \lambda/2$ it holds that  $M''(z) < 2S$. For simplicity, we assume throughout that $n$ is bounded below by a large enough constant.

Let $\alpha =\min(\tfrac{\e}{6S}, \lambda/2)$. We can assume that $\e\leq 1/4$ and thus that $\alpha \leq 1/6$. Define the following potential functions
\begin{align*}
 \Phi(x(t)) &:= \sum_{i=1}^{n} \exp(\alpha x(t)_i)\\
 \Psi(x(t))  &:= \sum_{i=1}^{n} \exp(-\alpha x(t)_i)\\
 \Gamma(x(t)) &:=  \Phi(x(t)) + \Psi(x(t))
\end{align*}

We start by calculating the expected change of $\Phi$ and $\Psi$ individually.
For ease of notation we write $\Phi$  or $\Phi(t)$ when the context clear.

\begin{lemma} For $\Phi$ defined as above,
\begin{equation}\label{BPhi} \Ex[\Phi(t+1) - \Phi(t)~|~x(t) ]  \leq \sum_{i=1}^n \left(p_i(\alpha +S\alpha^2) - (\tfrac{\alpha}{n}-S\tfrac{\alpha^2}{n^2})\right) e^{\alpha x_i}.
\end{equation}
\end{lemma}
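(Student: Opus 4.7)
My plan is to compute $\Ex[\Phi(t+1) - \Phi(t)\mid x(t)]$ by linearity of expectation applied bin-by-bin, then use the moment generating function $M$ together with a second-order Taylor estimate.

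First, I would fix $x(t)$ and look at a single bin $i$ in the (unsorted) intermediate vector $y$. With probability $p_i$ this bin is the one chosen to receive the ball, so its gap becomes $x_i + W(1-\tfrac{1}{n})$; with probability $1-p_i$ its gap becomes $x_i - \tfrac{W}{n}$. Since $\Phi$ is symmetric in its arguments, the subsequent sort does not change $\Phi(t+1)$, so
\begin{align*}
\Ex[\Phi(t+1)\mid x(t)] &= \sum_{i=1}^n \Bigl( p_i\,\Ex_W\!\bigl[e^{\alpha x_i + \alpha W(1-1/n)}\bigr] + (1-p_i)\,\Ex_W\!\bigl[e^{\alpha x_i - \alpha W/n}\bigr]\Bigr)\\
&= \sum_{i=1}^n e^{\alpha x_i}\Bigl(p_i\,M\!\bigl(\alpha(1-\tfrac{1}{n})\bigr) + (1-p_i)\,M\!\bigl(-\tfrac{\alpha}{n}\bigr)\Bigr).
\end{align*}
Subtracting $\Phi(t) = \sum_i e^{\alpha x_i}$ reduces the lemma to bounding $p_i\bigl(M(\alpha(1-\tfrac1n))-1\bigr) + (1-p_i)\bigl(M(-\tfrac{\alpha}{n})-1\bigr)$.

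Next I would invoke Taylor's theorem with integral remainder: since $M(0)=1$, $M'(0)=\Ex[W]=1$, and $0 \le M''(u)\le 2S$ on $|u|<\lambda/2$, we get
\[
z \;\le\; M(z) - 1 \;\le\; z + S z^2 \qquad\text{for all } |z|\le \lambda/2.
\]
Because $\alpha \le \lambda/2$, both $\alpha(1-\tfrac{1}{n})$ and $-\tfrac{\alpha}{n}$ lie in this range, so $M(\alpha(1-\tfrac{1}{n}))-1 \le \alpha(1-\tfrac{1}{n}) + S\alpha^2$ and $M(-\tfrac{\alpha}{n})-1 \le -\tfrac{\alpha}{n} + \tfrac{S\alpha^2}{n^2}$. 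Plugging in and collecting,
\[
p_i\Bigl(\alpha(1-\tfrac1n) + S\alpha^2\Bigr) + (1-p_i)\Bigl(-\tfrac{\alpha}{n} + \tfrac{S\alpha^2}{n^2}\Bigr)
= p_i(\alpha + S\alpha^2) - \bigl(\tfrac{\alpha}{n} - \tfrac{S\alpha^2}{n^2}\bigr) - \tfrac{p_i S\alpha^2}{n^2},
\]
and the last term is non-positive, so it can be dropped to yield exactly the expression in \eqref{BPhi}.

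There is essentially no obstacle beyond bookkeeping; the only things to double-check are that the sorting step is innocuous (true because $\Phi$ is symmetric), that $\alpha$ was chosen so that the argument of $M$ stays safely within the Taylor region where $M''\le 2S$, and that we used $\Ex[W]=1$ to identify the linear coefficient. I would state the Taylor bound as a short inline observation rather than a separate sub-lemma, since the substitution and collection into the $p_i$-coefficient form is the only real content.
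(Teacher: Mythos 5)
Your proposal is correct and follows essentially the same route as the paper: a bin-by-bin decomposition of the change in $\Phi$ (noting sorting is irrelevant since $\Phi$ is symmetric), followed by the second-order Taylor bound $M(z)-1\le z+Sz^2$ using $M(0)=1$, $M'(0)=\Ex[W]=1$, $M''\le 2S$ on the relevant range, and then collecting terms and dropping the non-positive $-p_iS\alpha^2/n^2$ term. The only cosmetic difference is that you use the integral-remainder form of Taylor's theorem where the paper uses the Lagrange form.
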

\begin{proof}
Let $\Delta_i$ denote the change in $\Phi_i= \exp(\alpha x_i)$, i.e. $\Delta_i = \exp(\alpha y_i) -  \exp(\alpha x_i)$, where $y_i = x_i + W - \frac{W}{n}$ with probability $p_i$, and $y_i= x_i -\frac{W}{n}$ otherwise. In the first case, when the ball is placed in bin $i$, the expected change (taken over randomness in $W$) $\Delta_i$ is
\begin{eqnarray*}
\Ex[e^{\alpha(x_i + W-\frac{W}{n})}] - e^{\alpha x_i} &=& e^{\alpha x_i}(M(\alpha (1-\tfrac{1}{n})) -1 )\\
&=&e^{\alpha x_i} ( M(0) + M'(0)\alpha (1-\tfrac{1}{n}) + M''(\zeta) (\alpha (1-\tfrac{1}{n}))^2/2 - 1)
\end{eqnarray*}
for some $\zeta \in [0, \alpha (1-\tfrac{1}{n})]$. By the assumption on $\mathcal{W}$ and $\alpha$, $M''(\zeta) \leq 2S$. Moreover, $M(0) = 1$ and $M'(0) = \Ex[W]=1$. Thus the above expression can be bounded from above by
$$ e^{\alpha x_i} ( \alpha (1-\tfrac{1}{n}) + S \alpha^2).$$

Similarly, in the case that the ball goes to a bin other than $i$, the expected value of $\Delta_i$ can be bounded by $(-\tfrac{\alpha}{n}+ S\tfrac{\alpha^2}{n^2})e^{\alpha x_i}$.
Thus
$$\Ex[\Delta_i] \leq    p_i(\alpha(1-\tfrac{1}{n}) + S\alpha^2)e^{\alpha x_i} - (1-p_i)(\tfrac{\alpha}{n}-S\tfrac{\alpha^2}{n^2})e^{\alpha x_i} \leq  \left(p_i(\alpha + S\alpha^2) - (\tfrac{\alpha}{n}-S\tfrac{\alpha^2}{n^2})\right)e^{\alpha x_i}.$$
The claim follows.
%
%
%
%
%
%
%
 \end{proof}

\begin{corollary}
\label{cor:phiincrease}
\begin{align}\Ex[\Phi(t+1) - \Phi(t)~|~x(t) ] \leq \frac{2\alpha}{n} \Phi(t)\end{align}
\end{corollary}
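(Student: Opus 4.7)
The plan starts from the bound established in the lemma just above,
$$\Ex[\Phi(t+1) - \Phi(t) \mid x(t)] \leq \sum_{i=1}^n \left(p_i(\alpha + S\alpha^2) - \left(\tfrac{\alpha}{n} - \tfrac{S\alpha^2}{n^2}\right)\right) e^{\alpha x_i},$$
and shows the right-hand side is at most $\tfrac{2\alpha}{n}\Phi(t)$. The negative term contributes $-\tfrac{\alpha}{n}\Phi(t) + \tfrac{S\alpha^2}{n^2}\Phi(t)$ after summation, so the whole proof reduces to controlling the $p_i$-weighted sum $\sum_i p_i e^{\alpha x_i}$ and showing that its leading $\alpha/n$ contribution cancels against the $-\tfrac{\alpha}{n}\Phi(t)$ term.

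The key inequality will be $\sum_i p_i e^{\alpha x_i} \leq \tfrac{1}{n}\Phi(t)$, which I would prove by a rearrangement-style argument. By assumption \eqref{assume1}, the weights $p_i$ are non-decreasing in $i$; on the other hand, because $x(t)$ is sorted in non-increasing order, $e^{\alpha x_i}$ is non-increasing in $i$. So the $p_i$ concentrate mass on the lightest bins, and the weighted average of $e^{\alpha x_i}$ should be at most the uniform average. To make this rigorous I would set $q_i = p_i - 1/n$, note that $\sum_i q_i = 0$ and that $q_i$ is non-decreasing, hence changes sign at most once (from non-positive to non-negative) at some index $k$. Splitting the sum at $k$ and using $e^{\alpha x_i} \geq e^{\alpha x_k}$ for $i \leq k$ (where $q_i \leq 0$) and $e^{\alpha x_i} \leq e^{\alpha x_k}$ for $i > k$ (where $q_i \geq 0$) yields $\sum_i q_i e^{\alpha x_i} \leq e^{\alpha x_k}\sum_i q_i = 0$, as desired.

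Substituting this back into the lemma's bound leaves
$$\Ex[\Phi(t+1) - \Phi(t) \mid x(t)] \leq \left(\tfrac{S\alpha^2}{n} + \tfrac{S\alpha^2}{n^2}\right)\Phi(t) \leq \tfrac{2S\alpha^2}{n}\Phi(t),$$
and the conclusion follows from the crude bound $S\alpha \leq 1$, which holds because $\alpha \leq \epsilon/(6S)$ with $\epsilon \leq 1/4$ gives $S\alpha \leq 1/24$. The only non-routine step is the rearrangement comparison above; everything else is arithmetic using the parameter choices already fixed in the lead-up to Theorem~\ref{thm:potentialbound}, so the main obstacle is really just the recognition that the monotonicity of $p_i$ from \eqref{assume1} lines up oppositely with the sorted order of $x(t)$, making the per-bin probabilities favorable in exactly the direction we need.
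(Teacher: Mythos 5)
Your proof is correct and follows essentially the same route as the paper: the key step in both is that, since $p_i$ is non-decreasing by \eqref{assume1} while $e^{\alpha x_i}$ is non-increasing in $i$, the weighted sum $\sum_i p_i e^{\alpha x_i}$ is at most the uniform average $\Phi(t)/n$ (your sign-change argument with $q_i = p_i - \tfrac{1}{n}$ just makes explicit what the paper states informally). The only difference is bookkeeping: the paper drops the negative term and uses $\alpha + S\alpha^2 \leq 2\alpha$, whereas you retain it and obtain the slightly sharper intermediate bound $\tfrac{2S\alpha^2}{n}\Phi(t)$ before relaxing to the stated inequality.
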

\begin{proof}
Note that $S\alpha \leq \frac{1}{6} < 1$ so that
\begin{align}\Ex[\Phi(t+1) - \Phi(t)~|~x(t) ] \leq \sum_{i=1}^n 2\alpha p_i e^{\alpha x_i}.\end{align}
The claim follows by observing that $p_i$'s are increasing and $x_i$'s are decreasing, so that the expression is at most what it would be if the $p_i$'s were all equal.
\end{proof}


Similar arguments show that
\begin{lemma}
Let $\Psi$ be defined as above. Then
\begin{equation}\label{BPsi} \Ex[\Psi(t+1) - \Psi(t)~|~x(t) ]  \leq \sum_{i=1}^n \left(p_i(-\alpha +S\alpha^2)  + (\tfrac{\alpha}{n}+S\tfrac{\alpha^2}{n^2})\right)e^{-\alpha x_i}.
\end{equation}
\end{lemma}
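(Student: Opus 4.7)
The plan is to mirror the proof of the companion bound on $\Ex[\Phi(t+1) - \Phi(t) \mid x(t)]$, with sign changes tracked carefully throughout. Let $\tilde{\Delta}_i := \exp(-\alpha y_i) - \exp(-\alpha x_i)$, where $y$ is the (unsorted) updated load vector produced by one step of the Markov chain. Because $\Psi$ is a symmetric function of its coordinates, the subsequent sort leaves its value unchanged, so it suffices to bound $\sum_i \Ex[\tilde{\Delta}_i \mid x(t)]$ summand by summand.

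I would split on whether the new ball lands in bin $i$. Conditional on that event (probability $p_i$), $y_i = x_i + W - W/n$, and the increment in the $i$th summand equals $e^{-\alpha x_i}\bigl(M(-\alpha(1 - \tfrac{1}{n})) - 1\bigr)$. A second-order Taylor expansion of $M$ around $z=0$, using $M(0) = M'(0) = 1$ and the uniform bound $|M''(\zeta)| \leq 2S$ valid on $|\zeta| \leq \lambda/2$, bounds this by $e^{-\alpha x_i}\bigl(-\alpha(1-\tfrac{1}{n}) + S\alpha^2\bigr)$. Conditional on the complementary event (probability $1-p_i$), we have $y_i = x_i - W/n$, giving the contribution $e^{-\alpha x_i}\bigl(M(\tfrac{\alpha}{n}) - 1\bigr)$, which the same expansion bounds by $e^{-\alpha x_i}\bigl(\tfrac{\alpha}{n} + S\tfrac{\alpha^2}{n^2}\bigr)$.

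Combining the two cases, collecting the linear-in-$\alpha$ and quadratic-in-$\alpha$ terms, and discarding the negative cross term $-p_i S\alpha^2/n^2$ (exactly as in the $\Phi$ proof) yields the per-coordinate bound asserted in the lemma, and summing over $i$ finishes the argument. There is no substantive obstacle: the proof is mechanically parallel to the one just given for $\Phi$. The only point requiring attention is the sign flip in the linear term of the Taylor expansion, which makes the ``ball lands in bin $i$'' case now contribute a negative $-p_i\alpha\, e^{-\alpha x_i}$ and the ``ball lands elsewhere'' case a positive $(\alpha/n)\, e^{-\alpha x_i}$ --- consistent with the intuition that $\Psi$ is driven large by bins far below the average, and that placing extra weight into such a bin decreases $\Psi$ in expectation while the uniform $-W/n$ shift applied to every other coordinate nudges it up.
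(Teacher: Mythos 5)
Your proof is correct and follows essentially the same route as the paper, which establishes this bound by the same argument as for $\Phi$ (condition on whether the ball lands in bin $i$, Taylor-expand $M$ at the points $-\alpha(1-\tfrac1n)$ and $\tfrac{\alpha}{n}$ using $M(0)=M'(0)=1$ and $M''(\zeta)\le 2S$ for $|\zeta|<\lambda/2$, then drop the harmless $-p_iS\alpha^2/n^2$ term). Your handling of the sign flip and the observation that sorting leaves the symmetric sum $\Psi$ unchanged are exactly the points the paper leaves implicit.
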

\begin{corollary}
\label{cor:psiincrease}
\begin{align}\Ex[\Psi(t+1) - \Psi(t)~|~x(t) ] \leq \frac{2\alpha}{n} \Psi(t)\end{align}
\end{corollary}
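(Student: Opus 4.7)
The plan is to mirror the proof of Corollary~\ref{cor:phiincrease}, but to exploit a sign difference that actually simplifies things. Starting from the bound~(\ref{BPsi}), I would split the right-hand side into two pieces:
\begin{equation*}
\sum_{i=1}^n p_i(-\alpha + S\alpha^2) e^{-\alpha x_i} \;+\; \Bigl(\tfrac{\alpha}{n} + S\tfrac{\alpha^2}{n^2}\Bigr)\sum_{i=1}^n e^{-\alpha x_i}.
\end{equation*}

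The first piece has coefficient $(-\alpha + S\alpha^2)$, which is non-positive since $S\alpha \leq \tfrac{1}{6} < 1$. Hence every summand in the first piece is $\leq 0$, and we can simply drop it from the upper bound. This is where the $\Psi$ case becomes easier than the $\Phi$ case: in Corollary~\ref{cor:phiincrease} the analogous $p_i$-weighted term had a positive coefficient $(\alpha + S\alpha^2)$ and had to be tamed by Chebyshev's sum inequality against the oppositely-sorted sequence $(e^{\alpha x_i})$. Here, asymmetrically, that trick is not needed at all.

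What remains is $\bigl(\tfrac{\alpha}{n} + S\tfrac{\alpha^2}{n^2}\bigr)\Psi(t)$. Using $S\alpha \leq \tfrac{1}{6}$ and $n \geq 1$, the prefactor is at most $\tfrac{\alpha}{n}\bigl(1 + \tfrac{1}{6n}\bigr) \leq \tfrac{7\alpha}{6n} \leq \tfrac{2\alpha}{n}$, which gives the claim. I do not anticipate any substantive obstacle; the only thing to double-check is that dropping the first piece is legitimate (it is, term-by-term) and that the arithmetic in the prefactor leaves slack. If one ever wanted the tighter $-\Omega(\tfrac{\alpha}{n})\Psi(t)$ contribution from the first piece, one would apply Chebyshev's sum inequality in the \emph{opposite} direction to the $\Phi$ case: by~(\ref{assume1}) the $(p_i)$ are non-decreasing, and since the $x_i$ are sorted in non-increasing order the $(e^{-\alpha x_i})$ are also non-decreasing, so the two sequences are similarly sorted and $\sum_i p_i e^{-\alpha x_i} \geq \tfrac{1}{n}\Psi(t)$, which multiplied by a non-positive scalar yields the desired negative upper bound; but since $\tfrac{7\alpha}{6n} \leq \tfrac{2\alpha}{n}$ already, this extra refinement is unnecessary.
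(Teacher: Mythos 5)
Your proposal is correct and matches the paper's own (one-line) proof: drop the $p_i(-\alpha+S\alpha^2)e^{-\alpha x_i}$ terms since $p_i>0$ and $S\alpha<1$ make them non-positive, then bound the remaining $\left(\tfrac{\alpha}{n}+S\tfrac{\alpha^2}{n^2}\right)\Psi(t)$ by $\tfrac{2\alpha}{n}\Psi(t)$. The extra Chebyshev-style refinement you mention is, as you note, unnecessary here.
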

\begin{proof}
This follows immediately as $p_i>0$ and $S\alpha < \frac{1}{6}$.\end{proof}

We start by showing that for reasonably balanced configurations, both $\Phi$ and $\Psi$ have the right decrease in expectation. More precisely, if $x_{\frac{3n}{4}} \leq 0$, then $\Phi$ decreases in expectation, and if $x_{\frac{n}{4}} \geq 0$, then $\Psi$ decreases in expectation.

\begin{lemma}
\label{lem:phidecreases}
Let $\Phi$ be defined as above. If $x_{\frac{3n}{4}}(t) \leq 0$, then $\Ex[\Phi(t+1)~|~x(t)] \leq (1-\frac{\alpha\e}{n})\Phi(t) + 1 $.
\end{lemma}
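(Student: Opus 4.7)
The plan is to apply the one-step bound (\ref{BPhi}) and decompose its right-hand side into a leading drift term and a second-order error. Writing
\[
\sum_i\Bigl(p_i(\alpha+S\alpha^2)-\tfrac{\alpha}{n}+\tfrac{S\alpha^2}{n^2}\Bigr)e^{\alpha x_i} \;=\; \alpha\sum_i\Bigl(p_i-\tfrac{1}{n}\Bigr)e^{\alpha x_i} \;+\; S\alpha^2\sum_i\Bigl(p_i+\tfrac{1}{n^2}\Bigr)e^{\alpha x_i},
\]
I would bound the error by $2S\alpha^2\Phi/n$, using the same rearrangement that appears in Corollary~\ref{cor:phiincrease}: since $p_i$ is non-decreasing by (\ref{assume1}) and $e^{\alpha x_i}$ is non-increasing, Chebyshev's sum inequality gives $\sum_i p_i e^{\alpha x_i}\le\Phi/n$. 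Because $\alpha\le\e/(6S)$, this error is at most $\tfrac{\alpha\e}{3n}\Phi$, which will later be absorbed into the target decrease.

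For the drift I would split at index $3n/4$. The tail is immediate: the hypothesis $x_{3n/4}\le 0$ implies $e^{\alpha x_i}\le 1$ for every $i\ge 3n/4$, so
\[
\alpha\sum_{i\ge 3n/4}p_i e^{\alpha x_i} \;\le\; \alpha\sum_{i\ge 3n/4}p_i \;\le\; \alpha \;\le\; 1,
\]
and this is precisely where the additive $+1$ of the statement originates. For the head, Chebyshev's inequality restricted to the range $i<3n/4$ still applies ($p_i$ increasing, $e^{\alpha x_i}$ decreasing on any subrange), and combined with the inequality $\sum_{i<3n/4}p_i\le \tfrac{3}{4}-\e$ coming from (\ref{assume2}) it yields
\[
\sum_{i<3n/4}p_i e^{\alpha x_i} \;\le\; \tfrac{1}{3n/4}\Bigl(\sum_{i<3n/4}p_i\Bigr)\Phi^{<} \;\le\; \tfrac{1-4\e/3}{n}\,\Phi^{<},
\]
where $\Phi^{<}=\sum_{i<3n/4}e^{\alpha x_i}$.

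Combining the two pieces and subtracting $\tfrac{\alpha}{n}\Phi=\tfrac{\alpha}{n}(\Phi^{<}+\Phi^{\ge})$, the drift is at most
\[
-\tfrac{\alpha}{n}\Bigl(\tfrac{4\e}{3}\Phi^{<}+\Phi^{\ge}\Bigr)+\alpha \;\le\; -\tfrac{4\alpha\e}{3n}\Phi+\alpha,
\]
using $\tfrac{4\e}{3}\le 1$ (which holds since $\e\le 1/4$). Adding back the error bound $\tfrac{\alpha\e}{3n}\Phi$ yields $\Ex[\Phi(t+1)-\Phi(t)\mid x(t)]\le -\tfrac{\alpha\e}{n}\Phi(t)+\alpha\le -\tfrac{\alpha\e}{n}\Phi(t)+1$, which rearranges to the claim.

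The main obstacle---and also the main insight---is choosing the split that makes the $\e$-saving visible. The hypothesis $x_{3n/4}\le 0$ does double duty here: directly, it caps the tail by a constant (giving the additive $+1$), and indirectly, through (\ref{assume2}), it guarantees that the head's total probability mass is strictly less than the uniform prediction, which is exactly what Chebyshev converts into the multiplicative $\bigl(1-\tfrac{\alpha\e}{n}\bigr)$ contraction.
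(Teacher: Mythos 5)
Your proposal is correct and follows essentially the same route as the paper's proof: starting from \eqref{BPhi}, splitting the sum at index $\tfrac{3n}{4}$ so that the hypothesis $x_{\frac{3n}{4}}\leq 0$ caps the tail by a constant, and using the opposite monotonicity of $p_i$ and $e^{\alpha x_i}$ together with \eqref{assume2} and $S\alpha\leq\e/6$ to extract the $\e$-contraction on the head. Your Chebyshev-sum formulation is just a cleaner packaging of the paper's rearrangement/optimization step, and the bookkeeping (pulling out the $S\alpha^2$ error first) is equivalent.
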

\begin{proof}
We upper bound $\sum_{i=1}^n p_i(\alpha+S\alpha^2) e^{\alpha x_i}$ for a fixed $\Phi(x)$, for $x$ which is non increasing with $\sum_i x_i =0$. We first write
\begin{eqnarray}
\sum_{i=1}^n p_i(\alpha + S\alpha^2) e^{\alpha x_i}\nonumber &\leq& \sum_{i < \frac{3n}{4}} p_i(\alpha + S\alpha^2) e^{\alpha x_i} + \sum_{i \geq \frac{3n}{4}} p_i(\alpha + S\alpha^2)e^{0} \nonumber \\
&\leq& \sum_{i < \frac{3n}{4}} p_i(\alpha + S\alpha^2) e^{\alpha x_i} + 1 \label{b1}
\end{eqnarray}
since $\alpha + S\alpha^2 \leq \tfrac{6\e + \e^2}{36S} \leq 1$ by our assumptions that $\e\leq 1$ and $S\geq 1$.

Now set $y_i := e^{\alpha x_i}$. The first term above is no larger than the maximum value of
\begin{align*}
&(\alpha + S\alpha^2) \sum_{i < \frac{3n}{4}} p_i y_i\\
&~~~\text{subject to}\\
&\sum_{i < \frac{3n}{4}} y_i \leq \Phi\\
&y_{i-1} \geq y_{i} ~~~ \forall ~1<i<\tfrac{3n}{4}.
\end{align*}
Since $\mathbf{p}$ is non-decreasing and $\mathbf{y}$ is non-increasing, the maximum is achieved when $y_i = \tfrac{4\Phi}{3n}$ for each $i$, and is at most $(\alpha + S\alpha^2)(\frac{3}{4}-\e)\tfrac{4\Phi}{3n}$. 

We can now plug this bound in \eqref{b1}, and substituting in \eqref{BPhi} we upper-bound the expected change in $\Phi$. 
\begin{eqnarray*}
\Ex\left[\Phi(t+1)-\Phi(t)~|~x(t)\right]
&\leq& (\alpha + S\alpha^2)(\frac{3}{4}-\e)\tfrac{4\Phi}{3n}  - \left(\frac{\alpha}{n}-S\frac{\alpha^2}{n^2}\right)\Phi + 1\\
&\leq& \frac{\alpha\Phi}{n} \big( (1+S\alpha)(1-\frac{4\e}{3}) - 1  + S\frac{\alpha}{n}\big) + 1\\
~\text{Assuming } S\alpha \leq \e/6  \text{ we have}&&\\
&\leq& \frac{\alpha}{n}\Phi \left(\tfrac{\e}{6}-\tfrac{4\e}{3}+ \tfrac{\e}{6n} \right)+1\\
&\leq& -\frac{\alpha\e}{n} \Phi + 1
\end{eqnarray*}
The claim follows.
\end{proof}

\begin{lemma}
\label{lem:psidecreases}
Let $\Psi$ be defined as above. If $x_{\frac{n}{4}}(t) \geq 0$, then $\Ex[\Psi(t+1)~|~x(t)] \leq (1-\frac{\alpha\e}{n})\Psi(t)+1$.
\end{lemma}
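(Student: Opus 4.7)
The plan is to mirror the proof of Lemma~\ref{lem:phidecreases}, swapping the roles of the top and bottom quarter of bins and using the second half of~(\ref{assume2}), which says $\sum_{i > n/4} p_i \geq \tfrac{3}{4} + \e$. Writing $z_i := \exp(-\alpha x_i)$, I would first rearrange the right-hand side of~(\ref{BPsi}) as
\[
(\alpha/n + S\alpha^2/n^2)\Psi(t) \;-\; (\alpha - S\alpha^2)\sum_i p_i z_i,
\]
and observe that, since $\alpha - S\alpha^2 > 0$, the task reduces to producing a lower bound on $\sum_i p_i z_i$ of the form $(1 + \Omega(\e))\Psi(t)/n - O(1)$.

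The hypothesis $x_{n/4}(t) \geq 0$ gives $z_i \leq 1$ for every $i \leq n/4$, so $\sum_{i \leq n/4} z_i \leq n/4$, and hence $\sum_{i > n/4} z_i \geq \Psi(t) - n/4$. On the range $i > n/4$ both $(p_i)$ and $(z_i)$ are non-decreasing (the former by~(\ref{assume1}), the latter because $x$ is sorted in non-increasing order), so Chebyshev's sum inequality yields
\[
\sum_i p_i z_i \;\geq\; \sum_{i > n/4} p_i z_i \;\geq\; \frac{4}{3n}\Bigl(\sum_{i > n/4} p_i\Bigr)\Bigl(\sum_{i > n/4} z_i\Bigr) \;\geq\; \frac{(1+4\e/3)(\Psi(t) - n/4)}{n}.
\]

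Plugging this back in, the coefficient multiplying $\Psi(t)/n$ in the upper bound for $\Ex[\Psi(t+1) - \Psi(t)\mid x(t)]$ becomes $-\tfrac{4\alpha\e}{3} + S\alpha^2(1 + O(\e))$, which collapses to at most $-\alpha\e$ once $S\alpha \leq \e/6$ and $\e \leq 1/4$ are invoked. The leftover constant from the $-n/4$ shift is at most $(\alpha - S\alpha^2)(1+4\e/3)/4 < 1$, which is absorbed into the trailing $+1$. Assembling these pieces yields $\Ex[\Psi(t+1)\mid x(t)] \leq (1 - \alpha\e/n)\Psi(t) + 1$, as required. The only non-routine step is the Chebyshev-style lower bound on $\sum_{i > n/4} p_i z_i$: the proof of Lemma~\ref{lem:phidecreases} made $y$ constant to \emph{maximize} $\sum p_i y_i$ under non-increasingness, whereas here I need the dual statement that a non-decreasing $z$ with prescribed total mass cannot do worse in inner product against a non-decreasing $p$ than the uniform configuration, which is precisely what Chebyshev's sum inequality gives.
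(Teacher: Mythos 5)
Your proof is correct and follows essentially the same route as the paper: isolate the indices $i>n/4$, use $x_{n/4}\geq 0$ to get $\sum_{i>n/4} e^{-\alpha x_i}\geq \Psi - n/4$, and lower-bound the inner product $\sum_{i>n/4}p_i e^{-\alpha x_i}$ by its value at the uniform configuration (the paper phrases this as an explicit minimization over monotone vectors, you invoke Chebyshev's sum inequality, which is the same fact). The final bookkeeping with $S\alpha\leq \e/6$, $\e\leq 1/4$, $\alpha\leq 1/6$ matches the paper's calculation.
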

\begin{proof} 
We first upper bound  $\sum_{i=1}^n p_i(-\alpha+S\alpha^2) e^{-\alpha x_i}$ for a fixed $\Psi(x)$, for $x$ which is non increasing with $\sum_i x_i =0$. Since $(-\alpha+S\alpha^2)$ is negative, we have
\begin{align*}
\sum_{i=1}^n p_i(-\alpha + S\alpha^2) e^{-\alpha x_i} &\leq (-\alpha + S\alpha^2) \sum_{i \geq \frac{n}{4}} p_i e^{-\alpha x_i}
\end{align*}

Now set $z_i := e^{-\alpha x_i}$. Under the assumption on $x_{\frac{n}{4}}$, the sum $\sum_{i \geq \frac{n}{4}} z_i$ is at least $\Psi - \frac{n}{4}$. Since $(-\alpha+S\alpha^2)$ is negative, to upper bound the second term, we need to find the minimum value of
\begin{align*}
&\sum_{i \geq \frac{n}{4}} p_i z_i\\
&~~~\text{subject to}\\
&\sum_{i \geq \frac{n}{4}} z_i \geq \Psi - \frac{n}{4}\\
&z_{i-1} \geq z_{i} ~~~ \forall ~i>\tfrac{n}{4}.
\end{align*}
Since both $\mathbf{p}$ and $\mathbf{z}$ are (weakly) increasing, the minimum is achieved when $z_i = \tfrac{4(\Psi - \frac{n}{4})}{3n}$ for each $i$. Using the assumption that $\sum_{i\geq n/4}p_i \geq \frac{3}{4}+\e$ we can bound the expression above  by  $(-\alpha + S\alpha^2)(\frac{3}{4}+\e)\tfrac{4(\Psi - \frac{n}{4})}{3n}$. We can now upper-bound the expected change in $\Psi$ by plugging this bound in \eqref{BPsi}. 
\begin{eqnarray*}
\Ex[\Psi(t+1)-\Psi(t)~|~x(t)]
&\leq& (-\alpha + S\alpha^2)(\tfrac{3}{4}+\e)\tfrac{4(\Psi - \frac{n}{4})}{3n}  + \tfrac{\alpha}{n}(1+S\tfrac{\alpha}{n}) \Psi\\
&=& \tfrac{\alpha }{n}\left((1+S\tfrac{\alpha}{n})\Psi+(-1+S\alpha)(\tfrac{3}{4}+\e)\tfrac{4\Psi -n}{3}\right)   \\
&=& \tfrac{\alpha }{n}\left((1+S\tfrac{\alpha}{n})\Psi+S\alpha(\tfrac{3}{4}+\e)\tfrac{4\Psi -n}{3}-(\tfrac{3}{4}+\e)\tfrac{4\Psi -n}{3}\right)   \\
&\leq& \tfrac{\alpha\Psi}{n} \big(1+S\tfrac{\alpha}{n}+  S\alpha(\tfrac{3}{4}+\e)\tfrac{4}{3}- (\tfrac{3}{4}+\e)\tfrac{4}{3}\big) + \frac{\alpha}{3}(\tfrac{3}{4}+\e)\\
&\leq& -\frac{\alpha\e}{n} \Psi + 1
\end{eqnarray*}
where the last inequality follows since $\e\leq  \tfrac{1}{4}$ and $S\alpha \leq \tfrac{\e}{6}$.

%
\end{proof}

The next lemma will be useful in the case that $x_{\frac{3n}{4}} > 0$.
\begin{lemma}
\label{lem:phibadcase}
Suppose that $x_{\frac{3n}{4}} > 0$ and $\Ex[\Delta\Phi|x(t)] \geq -\frac{\alpha\e}{4n}\Phi$. Then either $\Phi < \frac{\e}{4}\Psi$ or $\Gamma < cn$ for some $c=poly(\frac{1}{\e})$.
\end{lemma}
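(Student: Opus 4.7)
The plan is to argue by contrapositive: assume $\Phi \geq \frac{\e}{4}\Psi$ and $\Gamma \geq cn$, and derive a violation of the hypothesis $\Ex[\Delta\Phi\mid x(t)] \geq -\frac{\alpha\e}{4n}\Phi$. Starting from the upper bound \eqref{BPhi} and using $S\alpha \leq \e/6$, this hypothesis can be rewritten (up to lower-order $1/n$ corrections) as $\sum_i p_i y_i \geq (1 - O(\e))\Phi/n$, where $y_i = e^{\alpha x_i}$; it therefore suffices to show that this inequality fails.

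The second step bounds the left side using Chebyshev's sum inequality, exploiting that $p_i$ is non-decreasing while $y_i$ is non-increasing. Writing $B = \sum_{i \leq 3n/4} y_i$ and $T = \sum_{i > 3n/4} y_i$ so $\Phi = B + T$, Chebyshev applied on the two blocks separately gives $\sum_{i \leq 3n/4} p_i y_i \leq \frac{4 F_{3n/4}}{3n} B \leq \frac{1-2\e/3}{n} B$ (using $F_{3n/4} \leq \tfrac{3}{4} - \tfrac{\e}{2}$, a consequence of \eqref{assume2} together with $p_{3n/4}\leq d/n$) and $\sum_{i > 3n/4} p_i y_i \leq \frac{dT}{n}$ (using $p_n \leq d/n$). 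Combining with the hypothesis and rearranging yields $T \geq \rho B$ for some $\rho = \Omega(\e/d)$, a positive constant depending on $\e$ and $d$ but not on $n$.

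Next I would invoke Jensen's inequality. Let $A = \sum_{i \leq 3n/4} x_i$, which is strictly positive because $x_{3n/4} > 0$. Convexity of $\exp$ on the bottom $n/4$ bins, whose $x_i$'s sum to $-A$, gives $\Psi \geq (n/4)\, e^{4\alpha A/n}$. Since $x_{3n/4} \leq 4A/(3n)$ (the average of the top $3n/4$ entries), we also get $T \leq (n/4)\, e^{4\alpha A/(3n)}$. Combined with $T \geq \rho B$ this yields $\Phi \leq (1 + 1/\rho)\, T \leq \frac{n}{2\rho}\, e^{4\alpha A/(3n)}$, and therefore $\Phi/\Psi \leq (2/\rho)\, e^{-8\alpha A/(3n)}$.

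Finally, split on the size of $A$. If $e^{-8\alpha A/(3n)} < \rho\e/8$ (i.e., $A$ is large), then $\Phi/\Psi < \e/4$ and the first disjunct holds. Otherwise $A \leq \frac{3n}{8\alpha}\log(8/(\rho\e))$, and substituting back into the bound on $\Phi$ gives $\Phi \leq n \cdot poly(1/\e)$; since $\Psi \leq 4\Phi/\e$, we obtain $\Gamma \leq (1 + 4/\e)\Phi \leq cn$ for a suitable $c = poly(1/\e)$, contradicting $\Gamma \geq cn$. The main obstacle will be step three: one must combine the Jensen upper bound on $T$ (driven by $x_{3n/4} \leq 4A/(3n)$) with the Jensen lower bound on $\Psi$ (driven by the bottom bins summing to $-A$) tightly enough that the factor $e^{-8\alpha A/(3n)}$ in $\Phi/\Psi$ decays fast enough in $A$ to resolve both cases with a single $poly(1/\e)$ final constant.
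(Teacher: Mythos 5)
Your proof is correct and follows the same overall strategy as the paper's: use the non-decrease hypothesis to force a constant fraction of $\Phi$'s mass onto the low-load indices, bound the load there by the average of the positive part (your $x_{3n/4}\leq 4A/(3n)$; the paper's $\Phi_{>n/3}\leq \tfrac{2n}{3}e^{3\alpha B/n}$), use Jensen on the bottom quarter to get $\Psi \geq \tfrac{n}{4}e^{4\alpha A/n}$, and conclude from the exponent gap that either $\Phi < \tfrac{\e}{4}\Psi$ or $A=O(n/\alpha)$ and hence $\Gamma \leq cn$. The only substantive difference is in how the ``mass in the tail'' step is extracted: the paper cuts at index $n/3$ and uses the pointwise bound $p_i \leq \tfrac{1-4\e}{n}$ for $i\leq n/3$ (a fact about Greedy$[d]$ beyond the two stated assumptions), whereas you cut at $3n/4$ and use Chebyshev's sum inequality on the block together with \eqref{assume2} and $p_i\leq d/n$; your route stays closer to the advertised assumptions, at the price of a tail fraction $\rho=\Omega(\e/d)$ and hence a final constant depending on $d$ as well as $\e$, which is harmless since all constants in the paper may depend on $d$. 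One point worth making explicit in a write-up: your step one must produce $\sum_i p_iy_i \geq (1-\delta)\Phi/n$ with $\delta$ strictly below the $2\e/3$ slack from $F_{3n/4}\leq \tfrac34-\tfrac{\e}{2}$; this does hold, since $\delta \leq \e/4+S\alpha+O(1/n) \leq \e/4+\e/6+o(1) < 2\e/3$, so there is no gap.
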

\begin{proof}
First note that the expected increase in $\Phi$ is at most
\begin{eqnarray}
\sum_{i} (p_i(\alpha + S\alpha^2) - \tfrac{\alpha}{n} + S\tfrac{\alpha^2}{n^2}) e^{\alpha x_i}\nonumber
&\leq& \sum_{i\leq n/3} (p_i(\alpha + S\alpha^2) - \tfrac{\alpha}{n}+ S\tfrac{\alpha^2}{n^2}) e^{\alpha x_i} \nonumber
+ (\alpha + S\alpha^2)\sum_{i>n/3}p_ie^{\alpha x_i}\\
&\leq&  -\frac{\alpha \e}{2n}\Phi_{\leq n/3} +  \frac{2\alpha}{n}\Phi_{>n/3} \nonumber\\
&\leq&  -\frac{\alpha \e}{2n}\Phi +  \frac{3\alpha}{n}\Phi_{>n/3}\label{maxphi}
\end{eqnarray}
where in the next to last inequality we used  that for $i\leq n/3$, $p_i \leq \tfrac{1-4\e}{n}$ and  that for given $\Phi$, $\sum p_i e^{\alpha x_i}$ is maximized when $\mathbf p$ is uniform.

Thus $\Ex[\Delta\Phi|x(t)] \geq -\frac{\alpha\e}{4n}\Phi$  implies that
$$\frac{3\alpha}{n} \Phi_{> \frac{n}{3}} \geq \frac{\alpha\e}{4n} \Phi.$$
Let $B = \sum_{i} \max(0,x_i) = \frac{1}{2} ||x||_{1}$. Note that $\Phi_{\geq \frac{n}{3}}$ is upper bounded by $\frac{2n}{3} e^{\frac{3\alpha B}{n}}$. Thus
\begin{equation}
\Phi \leq \frac{12}{\e} \Phi_{> \frac{n}{3}} \leq \frac{8n}{\e} e^{\frac{3\alpha B}{n}}.
\label{eq:phiub}
\end{equation}
On the other hand, $x_{\frac{3n}{4}} > 0$ implies that $\Psi \geq \frac{n}{4} e^{\frac{4\alpha B}{n}}$.

If $\Phi < \frac{\e}{4}\Psi$, we are already done. Otherwise,
\begin{align*}
\frac{8n}{\e} e^{\frac{3\alpha B}{n}} \geq \Phi \geq \frac{\e}{4}\Psi \geq \frac{\e n}{16} e^{\frac{4\alpha B}{n}}
\end{align*}
so that $e^{\frac{\alpha B}{n}} \leq \frac{128}{\e^2}$. It follows that $$\Gamma \leq \tfrac{5}{\e}\Phi \leq \tfrac{40n}{\e^2}(\tfrac{128}{\e})^3 \leq cn.$$
\end{proof}

Similarly,
\begin{lemma}
\label{lem:psibadcase}
Suppose that $x_{\frac{n}{4}} < 0$ and $\Ex[\Delta\Psi|x(t)] \geq -\frac{\alpha\e}{4n}\Psi$. Then either $\Psi < \frac{\e}{4}\Phi$ or $\Gamma < cn$ for some $c=poly(\frac{1}{\e})$.
\end{lemma}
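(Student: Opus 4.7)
The plan is to mirror the proof of Lemma~\ref{lem:phibadcase}, interchanging the roles of $\Phi$ and $\Psi$ and splitting the sum at $i=2n/3$ instead of $i=n/3$, with one extra twist at the end. Starting from (\ref{BPsi}), I split: for $i\geq 2n/3$, using the analog of the $p_i$ bound implicit in Lemma~\ref{lem:phibadcase} (here $p_i\geq (1+4\e)/n$, which follows from (\ref{assume1}), (\ref{assume2}), and the specific form of $p_i$ for small enough $\e$), the bracketed coefficient $\bigl(p_i(-\alpha+S\alpha^2)+(\alpha/n)(1+S\alpha/n)\bigr)$ is at most $-\alpha\e/(2n)$. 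For $i<2n/3$ I drop the nonpositive $p_i$ contribution and bound the surviving constant term by $(2\alpha/n)e^{-\alpha x_i}$. Collecting these in analogy with (\ref{maxphi}) gives
\[ \Ex[\Delta\Psi\mid x(t)] \;\leq\; -\tfrac{\alpha\e}{2n}\Psi + \tfrac{3\alpha}{n}\Psi_{<2n/3}, \]
and the hypothesis $\Ex[\Delta\Psi\mid x(t)]\geq -\tfrac{\alpha\e}{4n}\Psi$ then forces $\Psi \leq (12/\e)\Psi_{<2n/3}$.

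Next, with $B=\tfrac{1}{2}\|x\|_1$, since $\sum_{i>2n/3}x_i \geq -B$ (total negative mass is $B$) and $x$ is sorted, $x_{2n/3}\geq -3B/n$, hence $x_i\geq -3B/n$ for all $i<2n/3$. This yields $\Psi_{<2n/3}\leq (2n/3)e^{3\alpha B/n}$ and therefore $\Psi\leq (8n/\e)e^{3\alpha B/n}$.

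The main obstacle, and where the proof departs from a clean mirror of Lemma~\ref{lem:phibadcase}, is the lower bound on $\Phi$. The condition $x_{n/4}<0$ forces all of the positive mass $B$ to reside within the top $n/4-1$ positions, but does not pin down how concentrated it is. If the $k\leq n/4-1$ positive entries sum to $B$, Jensen gives $\Phi \geq ke^{\alpha B/k}$, and minimizing $k\mapsto ke^{\alpha B/k}$ over $k\in[1,n/4]$ gives
\[ \Phi \;\geq\; \min\!\Bigl(e^{\alpha B},\; e\alpha B,\; \tfrac{n}{4}e^{4\alpha B/n}\Bigr), \]
depending on whether $\alpha B\leq 1$, $\alpha B\in[1,n/4]$, or $\alpha B\geq n/4$. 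In contrast, Lemma~\ref{lem:phibadcase} obtained the clean $\Psi\geq (n/4)e^{4\alpha B/n}$ for free, because the bottom $n/4$ was forced to be entirely negative there.

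Finally, assume $\Psi\geq (\e/4)\Phi$ (otherwise we are done). A short case analysis on the three regimes above finishes the argument: in the first two regimes we have $\alpha B/n\leq 1/4$, so $\Psi\leq (8n/\e)e^{3/4}=O(n/\e)$ directly; in the third regime, combining $\Psi\geq (\e n/16)e^{4\alpha B/n}$ with $\Psi\leq (8n/\e)e^{3\alpha B/n}$ yields $e^{\alpha B/n}\leq 128/\e^2$, so $\Psi\leq (8n/\e)(128/\e^2)^3$. In every case $\Psi,\Phi\leq \mathrm{poly}(1/\e)\cdot n$, hence $\Gamma\leq cn$ for some $c=\mathrm{poly}(1/\e)$, as required.
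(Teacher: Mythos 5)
Your argument is correct, and its first half coincides with the paper's proof: the same split of \eqref{BPsi} at $i=2n/3$, the same estimate $\Ex[\Delta\Psi\mid x(t)]\le -\tfrac{\alpha\e}{2n}\Psi+\tfrac{3\alpha}{n}\Psi_{\le 2n/3}$, and the same consequences $\Psi\le\tfrac{12}{\e}\Psi_{\le 2n/3}\le\tfrac{8n}{\e}e^{3\alpha B/n}$ (the paper asserts $p_i>\tfrac{1+\e}{n}$ for $i>2n/3$ where you use $\tfrac{1+4\e}{n}$; both statements, like the paper's, really rely on the explicit form of $p_i$ and on $\e$ being small, not on \eqref{assume1}--\eqref{assume2} alone). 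Where you genuinely depart is the lower bound on $\Phi$: the paper simply asserts that $x_{\frac{n}{4}}<0$ implies $\Phi\ge\tfrac{n}{4}e^{4\alpha B/n}$ and runs the chain $\tfrac{8n}{\e}e^{3\alpha B/n}\ge\Psi\ge\tfrac{\e}{4}\Phi\ge\tfrac{\e n}{16}e^{4\alpha B/n}$, whereas you correctly point out that this implication fails when the positive mass is concentrated on few bins (e.g.\ $x_1=B$, the rest slightly negative, with $\alpha B$ of order $n$), and you replace it by the honest bound $\Phi\ge\min\bigl(e^{\alpha B},\,e\alpha B,\,\tfrac{n}{4}e^{4\alpha B/n}\bigr)$ from minimizing $ke^{\alpha B/k}$ over $k\le n/4$, finishing the concentrated regimes by noting that $\alpha B\le n/4$ makes the upper bound $\Psi\le\tfrac{8n}{\e}e^{3/4}$ (plus $\Phi\le\tfrac{4}{\e}\Psi$) already give $\Gamma\le\mathrm{poly}(1/\e)\,n$; your third regime reproduces the paper's computation. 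So your proof is not just correct but patches a step the paper glosses over, at the cost of one extra (easy) case split. One quibble with your aside: in Lemma~\ref{lem:phibadcase} the bound $\Psi\ge\tfrac{n}{4}e^{4\alpha B/n}$ does \emph{not} come for free either, since $x_{\frac{3n}{4}}>0$ only forces the top $3n/4$ entries to be positive and the bottom quarter may still contain positive entries; the symmetric issue is present there and is fixable by exactly the same minimization trick you use here.
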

\begin{proof}
First observe that for any $i > \frac{2n}{3}$, $p_i > \frac{1+\e}{n}$ so that $p_i(-\alpha +S\alpha^2)+(\tfrac{\alpha}{n}+S\tfrac{\alpha^2}{n^2})\leq -\frac{\alpha\e}{2n}$. Since $p_i \geq 0$  it holds that  $p_i(-\alpha +S\alpha^2)+(\tfrac{\alpha}{n}+S\tfrac{\alpha^2}{n^2}) \leq \frac{2\alpha}{n}$ for every $i$. Using the upper bound from \eqref{BPsi} we get
\begin{align*}
\Ex[\Delta \Psi~|~x(t)] &\leq -\frac{\alpha\e}{2n} \Psi_{>\frac{2n}{3}} + \frac{2\alpha}{n} \Psi_{\leq \frac{2n}{3}}\\
&= -\frac{\alpha\e}{2n} \Psi + \frac{4\alpha  + \alpha\e}{2n} \Psi_{\leq \frac{2n}{3}}\\
&\leq -\frac{\alpha\e}{2n} \Psi + \frac{3\alpha}{n} \Psi_{\leq \frac{2n}{3}}.
\end{align*}

Thus $\Ex[\Delta\Psi~|~x(t)] \geq -\frac{\alpha\e}{4n}\Psi$  implies that
$$\frac{3\alpha}{n} \Psi_{\leq \frac{2n}{3}} \geq \frac{\alpha\e}{4n} \Psi.$$
Let $B = \sum_{i} \max(0,x_i) = \frac{1}{2} ||x||_{1}$. Note that $\Psi_{\leq \frac{2n}{3}}$ is upper bounded by $\frac{2n}{3} e^{\frac{3\alpha B}{n}}$. Thus
\begin{equation}
\Psi \leq \frac{12}{\e} \Psi_{\leq \frac{2n}{3}} \leq \frac{8n}{\e} e^{\frac{3\alpha B}{n}}.
\label{eq:psiub}
\end{equation}
On the other hand, $x_{\frac{n}{4}} < 0$ implies that $\Phi \geq \frac{n}{4} e^{\frac{4\alpha B}{n}}$.

If $\Psi < \frac{\e}{4}\Phi$, we are already done. Otherwise,
\begin{align*}
\frac{8n}{\e} e^{\frac{3\alpha B}{n}} \geq \Psi \geq \frac{\e}{4}\Phi \geq \frac{n \e}{16} e^{\frac{4\alpha B}{n}}
\end{align*}
so that $e^{\frac{\alpha B}{n}} \leq \frac{128}{\e^2}$. It follows that $$\Gamma \leq \tfrac{5}{\e}\Psi \leq \tfrac{40n}{\e^2}(\tfrac{128}{\e})^3 \leq cn.$$
\end{proof}

We are now ready to prove the supermartingale-type property of $\Gamma$.

\begin{theorem}
\label{thm:gammadecreases}
Let $\Gamma$ be as above. Then $\Ex[\Gamma(t+1)~|~x(t)] \leq (1-\frac{\alpha\e}{4n})\Gamma(t) + c$, for a constant $c= c(\e) = poly(\frac{1}{\e})$.
\end{theorem}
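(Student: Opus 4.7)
The proof will proceed by case analysis on where the vector $x(t)$ sits relative to zero. Since $x$ is sorted in non-increasing order, we have $x_{n/4}(t) \geq x_{3n/4}(t)$, so at each time exactly one of the following three (essentially) mutually exclusive cases occurs: (A) $x_{3n/4} \leq 0 \leq x_{n/4}$; (B) $x_{n/4} < 0$ (which forces $x_{3n/4} < 0$); or (C) $x_{3n/4} > 0$ (which forces $x_{n/4} > 0$).

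\emph{Case (A)} is immediate: both Lemma~\ref{lem:phidecreases} and Lemma~\ref{lem:psidecreases} apply, and summing their conclusions yields $\Ex[\Gamma(t+1) \mid x(t)] \leq (1-\tfrac{\alpha\e}{n})\Gamma(t) + 2$, which is stronger than what we need.

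\emph{Case (B).} Since $x_{3n/4} < 0$, Lemma~\ref{lem:phidecreases} still gives the strong decrease $\Ex[\Delta\Phi \mid x(t)] \leq -\tfrac{\alpha\e}{n}\Phi(t) + 1$. To control $\Psi$ I would dichotomize using Lemma~\ref{lem:psibadcase}. Either $\Ex[\Delta\Psi \mid x(t)] \leq -\tfrac{\alpha\e}{4n}\Psi(t)$, in which case simply adding the two bounds gives the desired inequality; or one of the two alternatives of Lemma~\ref{lem:psibadcase} holds. In the first alternative $\Psi < \tfrac{\e}{4}\Phi$, so Corollary~\ref{cor:psiincrease} yields $\Ex[\Delta\Psi \mid x(t)] \leq \tfrac{2\alpha}{n}\Psi \leq \tfrac{\alpha\e}{2n}\Phi$, and combining with the $\Phi$ decrease gives $\Ex[\Delta\Gamma \mid x(t)] \leq -\tfrac{\alpha\e}{2n}\Phi + 1$; invoking $\Phi \geq \Gamma/(1+\e/4) \geq \Gamma/2$ converts this into $-\tfrac{\alpha\e}{4n}\Gamma + 1$. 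In the second alternative $\Gamma < cn$ for some $c = \mathrm{poly}(1/\e)$, and combining Corollaries~\ref{cor:phiincrease} and \ref{cor:psiincrease} gives the trivial bound $\Ex[\Delta\Gamma \mid x(t)] \leq \tfrac{2\alpha}{n}\Gamma$, which (because $\Gamma \leq cn$) is $O(\alpha c)$, easily absorbed into the additive constant $c(\e)$.

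\emph{Case (C)} is handled symmetrically: Lemma~\ref{lem:psidecreases} provides the main decrease for $\Psi$, and Lemma~\ref{lem:phibadcase} is used (with the analogous trichotomy) to either obtain a direct decrease for $\Phi$, or show $\Phi < \tfrac{\e}{4}\Psi$ so that $\Phi$'s worst-case increase is dominated by $\Psi$'s decrease, or conclude $\Gamma < cn$ and fall back on the crude bound of Corollary~\ref{cor:phiincrease}.

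\emph{Main obstacle.} The analytically delicate step is the sub-case of (B) (and its mirror in (C)) where one of the two potentials is much smaller than the other: one must convert a net decrease expressed in terms of $\Phi$ alone back into a decrease of $\Gamma$, which relies on the smaller potential being at most a small constant fraction of the larger. This hybrid step is exactly where Lemmas~\ref{lem:phibadcase} and \ref{lem:psibadcase} are designed to help. The only other subtlety is keeping track of constants in the ``$\Gamma < cn$'' alternative, which is routine but is what forces the additive term in the conclusion to be $c(\e) = \mathrm{poly}(1/\e)$ rather than an absolute constant.
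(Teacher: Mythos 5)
Your proposal is correct and follows essentially the same route as the paper: the same three-way case split on the signs of $x_{n/4}$ and $x_{3n/4}$, with Lemmas~\ref{lem:phidecreases} and~\ref{lem:psidecreases} handling the balanced case and Lemmas~\ref{lem:phibadcase} and~\ref{lem:psibadcase} supplying the dichotomy ($\Phi$ or $\Psi$ dominated, or $\Gamma < cn$) in the unbalanced cases. The details you supply (e.g.\ converting $-\tfrac{\alpha\e}{2n}\Phi$ into $-\tfrac{\alpha\e}{4n}\Gamma$ via $\Phi \geq \Gamma/2$, and absorbing the $\Gamma < cn$ case into the additive constant) match the paper's Cases 2.1--2.2 and 3.1--3.2.
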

\begin{proof}

The proof proceeds via a case analysis. In case the conditions, $x_{\frac{n}{4}} \geq 0$ and $x_{\frac{3n}{4}} \leq 0$ hold, we show both $\Phi$ and $\Psi$ decrease in expectation. If one of these is violated Lemmas~\ref{lem:phibadcase} and~\ref{lem:psibadcase} come to the rescue.

\medskip\noindent{\bf Case 1: $x_{\frac{n}{4}} \geq 0$ and $x_{\frac{3n}{4}} \leq 0$.} In this case the theorem follows from Lemmas~\ref{lem:phidecreases} and~\ref{lem:psidecreases}.

\medskip\noindent{\bf Case 2: $x_{\frac{n}{4}} \geq x_{\frac{3n}{4}} > 0$.} Intuitively, this means that
the allocation is very non symmetric with big holes in the less loaded
bins. While $\Phi$ may sometimes grow in expectation, we will show that if that happens, then the asymmetry implies
that $\Gamma$ is dominated by $\Psi$ which decreases. Thus the
decrease in $\Psi$ offsets the increase in $\Phi$ and the expected
change in $\Gamma$ is negative.

Formally, if $\Ex[\Delta\Phi|x] \leq -\frac{\alpha\e}{4n}\Phi$, Lemma~\ref{lem:psidecreases} implies the result. Otherwise, by Lemma~\ref{lem:phibadcase} there are two subcases:

\medskip\noindent{\bf Case 2.1: $\Phi < \frac{\e}{4}\Psi$.} In this case, using Lemma~\ref{lem:psidecreases} and Corollary~\ref{cor:phiincrease}
\begin{equation*}
\Ex[\Delta\Gamma|x] = \Ex[\Delta\Phi|x]+\Ex[\Delta\Psi|x]  \leq \frac{2\alpha}{n}\Phi - \frac{\alpha \e}{n} \Psi + 1 \leq -\frac{\alpha \e}{2n} \Psi + 1 \leq -\frac{\alpha \e}{4n}\Gamma +1
\end{equation*}

\medskip\noindent{\bf Case 2.2: $\Gamma < cn$.} In this case, Corollaries~\ref{cor:phiincrease} and~\ref{cor:psiincrease} imply that
$$\Ex[\Delta \Gamma |x] \leq \frac{2\alpha}{n}\Gamma \leq 2c\alpha.$$
On the other hand, $c-\frac{\alpha\e}{4n} \Gamma \geq c(1-\frac{\alpha\e}{4})  > 2c\alpha$.

\medskip\noindent{\bf Case 3: $x_{\frac{3n}{4}} \leq x_{\frac{n}{4}} < 0$.} This case is similar to case 2. If $\Ex[\Delta\Psi|x] \leq -\frac{\alpha\e}{4n}\Psi$, Lemma~\ref{lem:phidecreases} implies the result. Otherwise, by Lemma~\ref{lem:psibadcase} there are two subcases:

\medskip\noindent{\bf Case 3.1: $\Psi < \frac{\e}{4}\Phi$.} In this case, using Lemma~\ref{lem:phidecreases} and Corollary~\ref{cor:psiincrease}, the claim follows.

\medskip\noindent{\bf Case 3.2: $\Gamma < cn$.} This case is the same as case

\end{proof}

\noindent Once we have shown that $\Gamma$ decreases in expectation when large, we can use that to bound the expected value of $\Gamma$.

We are now ready to prove Theorem \ref{thm:potentialbound}.
\begin{theorem} For any $t\geq 0$, $\Ex[\Gamma(t)] \leq \frac{4c}{\alpha\e} n$.
\end{theorem}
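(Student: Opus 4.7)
The plan is to derive the bound by a direct induction on $t$, using the one-step contraction established in Theorem~\ref{thm:gammadecreases} as the engine. Taking expectations over $x(t)$ on both sides of that theorem yields the unconditional recurrence
\begin{equation*}
\Ex[\Gamma(t+1)] \;\leq\; \left(1-\tfrac{\alpha\e}{4n}\right)\Ex[\Gamma(t)] + c,
\end{equation*}
which is a standard affine contraction. Its unique fixed point is exactly $\Gamma^{\star} = c\big/\tfrac{\alpha\e}{4n} = \tfrac{4c}{\alpha\e}n$, so the target bound is precisely the fixed point of the recurrence — any trajectory starting at or below $\Gamma^{\star}$ remains there forever.

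For the base case, note that the process begins with $x_i(0)=0$ for all $i$, so $\Phi(0)=\Psi(0)=n$ and thus $\Gamma(0)=2n$. Since $c = c(\e) = \mathrm{poly}(1/\e)$ is (in particular) at least $\alpha\e/2$, we have $\Gamma(0) = 2n \leq \tfrac{4c}{\alpha\e}n$, giving the base case. For the inductive step, assume $\Ex[\Gamma(t)] \leq \tfrac{4c}{\alpha\e}n$; then
\begin{equation*}
\Ex[\Gamma(t+1)] \;\leq\; \left(1-\tfrac{\alpha\e}{4n}\right)\cdot\tfrac{4c}{\alpha\e}n + c \;=\; \tfrac{4c}{\alpha\e}n - c + c \;=\; \tfrac{4c}{\alpha\e}n,
\end{equation*}
closing the induction.

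There is essentially no obstacle left: all the real work — the case analysis showing $\Gamma$ behaves like a supermartingale plus an additive drift — was carried out in Lemmas~\ref{lem:phidecreases}–\ref{lem:psibadcase} and Theorem~\ref{thm:gammadecreases}. The only subtle point worth flagging is the non-expansive constant: one must verify that the additive error $c$ and the multiplicative contraction $(1-\tfrac{\alpha\e}{4n})$ are compatible with the chosen target $\tfrac{4c}{\alpha\e}n$, which they are by construction. Finally, since $|X_i^t| = \max(X_i^t, -X_i^t)$ implies $\exp(\alpha|X_i^t|) \leq \exp(\alpha X_i^t) + \exp(-\alpha X_i^t)$, the bound $\Ex[\Gamma(t)]\leq \tfrac{4c}{\alpha\e}n$ immediately yields Theorem~\ref{thm:potentialbound} with $a=\alpha$ and $b=\tfrac{4c}{\alpha\e}$.
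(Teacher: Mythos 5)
Your proof is correct and follows essentially the same route as the paper: take expectations of the one-step drift bound from Theorem~\ref{thm:gammadecreases} and induct on $t$, observing that $\frac{4c}{\alpha\e}n$ is the fixed point of the affine recurrence (you additionally spell out the base case $\Gamma(0)=2n$ and the translation to Theorem~\ref{thm:potentialbound}, which the paper leaves implicit).
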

\begin{proof}
We show the claim by induction. For $t=0$, it is trivially true. By Theorem~\ref{thm:gammadecreases}, we have
\begin{align*}
\Ex[\Gamma(t+1)] &= \Ex[ \Ex[\Gamma(t+1)~|~\Gamma(t)]]\\
&\leq \Ex[ (1-\frac{\alpha\e}{4n})\Gamma(t) + c]\\
&\leq \frac{4c}{\alpha\e} n (1-\frac{\alpha \e}{4n}) + c\\
&\leq \frac{4c}{\alpha\e} n - c + c
\end{align*}
The claim follows.
\end{proof}

\end{document}